\theoremstyle{plain}
\long\def\comment#1{}
\newcommand{\Prob}{\ensuremath{{\mathbb{P}}}}
\newtheorem{claim}{Claim}[section]
\newtheorem{lemma}[claim]{Lemma}
\newtheorem{fact}[claim]{Fact}
\newtheorem{theorem}{Theorem}
\newtheorem{definition}[claim]{Definition}
 \newenvironment{proofof}[1]{{\bf {\em Proof of #1.}}}{\hfill \qed 
 }
\tikzset{
data/.style={circle, draw, text centered, minimum height=3em ,minimum width = .5em, inner sep = 2pt},
empty/.style={circle, text centered, minimum height=3em ,minimum width = .5em, inner sep = 2pt},
}
\pgfplotsset{compat=1.5}
\newcommand{\G}{{\sf G}}
\newcommand{\Clique}{{\sf Clique}}
\newcommand{\One}{\mathbbm{1}}
\newcommand{\EE}{\ensuremath{{\mathbb{E}}}}
\newcommand{\PP}{\ensuremath{{\mathbb{P}}}}
\newcommand{\QQ}{\ensuremath{{\mathbb{Q}}}}
\DeclareMathOperator{\Var}{Var}
\newlength{\widebarargwidth}
\newlength{\widebarargheight}
\newlength{\widebarargdepth}
\newcommand{\1}{\ensuremath{{\sf (i)}}}
\newcommand{\2}{\ensuremath{{\sf (ii)}}}
\begin{document}
\begin{center}
	
	{{\LARGE{ \mbox{Low-degree phase transitions for detecting a planted clique}\\ \mbox{in sublinear time}}}}
	
	\vspace*{.2in}
	
	{\large{
			\begin{tabular}{ccc}
				Jay Mardia$^{\dagger}$, Kabir Aladin Verchand$^{\ddagger,\diamond}$, and Alexander S. Wein$^{\star}$
			\end{tabular}
	}}
	\vspace*{.2in}
	
	\begin{tabular}{c}
		Department of Electrical Engineering$^{\dagger}$, Stanford University\\
		Statistical Laboratory$^{\ddagger}$, University of Cambridge\\
		Schools of Industrial and Systems Engineering$^\diamond$,
		Georgia Institute of Technology\\
		Department of Mathematics$^\star$, University of California, Davis
	\end{tabular}
	
	\vspace*{.2in}

	
	\vspace*{.2in}
	
	\begin{abstract}
		We consider the problem of detecting a planted clique of size $k$ in a random graph on $n$ vertices.  When the size of the clique exceeds $\Theta(\sqrt{n})$, polynomial-time algorithms for detection proliferate.  We study faster---namely, sublinear time---algorithms in the high-signal regime when $k = \Theta(n^{1/2 + \delta})$, for some $\delta > 0$.  To this end, we consider algorithms that non-adaptively query a subset $M$ of entries of the adjacency matrix and then compute a low-degree polynomial function of the revealed entries.  We prove a computational phase transition for this class of \emph{non-adaptive low-degree algorithms}: under the scaling $\lvert M \rvert = \Theta(n^{\gamma})$, the clique can be detected when $\gamma > 3(1/2 - \delta)$ but not when $\gamma < 3(1/2 - \delta)$. As a result, the best known runtime for detecting a planted clique, $\widetilde{O}(n^{3(1/2-\delta)})$, cannot be improved without looking beyond the non-adaptive low-degree class.

		Our proof of the lower bound---based on bounding the conditional low-degree likelihood ratio---reveals further structure in non-adaptive detection of a planted clique.  Using (a bound on) the conditional low-degree likelihood ratio as a potential function, we show that for \emph{every} non-adaptive query pattern, there is a highly structured query pattern of the same size that is at least as effective.
	\end{abstract}
\end{center}

\section{Introduction} \label{sec:introduction}

Many high-dimensional statistical inference problems (e.g., community detection~\cite{decelle2011asymptotic}, planted clique~\cite{jerrum1992large}, and tensor PCA~\cite{richard2014statistical}, to name a few) appear to exhibit statistical-computational gaps wherein the amount (or quality) of data required for all known polynomial-time algorithms may be significantly larger than the amount of data required information-theoretically.  

Central among these is the planted clique problem which we consider here.  In more detail, the planted clique problem consists of observing a graph $G$ on $n$ vertices which may have arisen from one of two distributions: the null distribution in which $G \sim G(n, 1/2)$ (the Erd\H{o}s--R{\'e}nyi distribution) and a planted distribution in which $k$ of the $n$ vertices form a clique and the remaining edges in the graph appear with probability $1/2$, independently. The goal of the detection task is to distinguish these two cases. Information-theoretically, it is possible to detect the presence of a planted clique of size $k \geq (2 + \epsilon) \log_2 (n)$ for any $\epsilon > 0$~\cite{bollobas1976}, whereas the best known polynomial-time algorithms require $k = \Omega(\sqrt{n})$ (see, e.g.,~\cite{kuvcera1995expected,alon1998finding,deshpande2015finding,barak2019nearly}, and the references therein). Each of the aforementioned algorithms requires the full observation of the random graph $G$, which has size $\Theta(n^2)$, and as a result they require runtime at least $\Omega(n^2)$.

When $n$ gets large, even this polynomial running time may prove prohibitively expensive, and it becomes of interest to apply algorithms which run in time sublinear in the input size (see, e.g., the review~\cite{rubinfeld2011sublinear} and references therein). In this work we aim to investigate precisely what runtime is required to detect a clique in the ``easy'' regime $k = \Theta(n^{1/2+\delta})$ for a constant $\delta \in (0,1/2)$. In this regime, the clique vertices can be identified simply based on their degree in the graph~\cite{kuvcera1995expected}, and therefore the maximum degree suffices as a statistic for distinguishing the null and planted distributions. While a naive computation of the maximum degree requires time $\Omega(n^2)$, a faster detection algorithm of runtime $\widetilde{O}(n^{3(1/2-\delta)})$ was given by~\cite{mardia2020finding}: the idea is to approximately estimate the degrees of some subset of the vertices while only examining $\widetilde{O}(n^{3(1/2-\delta)})$ entries of the adjacency matrix. Is this optimal, or might it be possible to reduce the runtime even further?

To explore the fundamental limits of sublinear-time computation, we will consider the \emph{query complexity} of algorithms, that is, the number of entries of the adjacency matrix that need to be read. This, after all, is the bottleneck in the runtime of~\cite{mardia2020finding}. Certainly any algorithm of runtime $O(t)$, for some $t = t(n)$, must make at most $O(t)$ queries. These queries can potentially be chosen adaptively, based on the results of previous queries. On the other hand, the algorithm of~\cite{mardia2020finding} is \emph{non-adaptive}, meaning it specifies upfront a \emph{mask} $M \subseteq {[n] \choose 2}$, i.e., a subset of entries of the input to be observed (depending only on the problem size $n$).

This suggests a natural path forward: prove lower bounds on the query complexity, which in turn imply lower bounds on runtime. In fact this has been studied already: for $k = \Theta(n^{1/2+\delta})$ with $\delta \in (-1/2,1/2)$, it is possible to detect a clique with $\widetilde{O}(n^{2(1/2-\delta)})$ non-adaptive queries, and up to log factors this number of queries is information-theoretically necessary (even if adaptivity is allowed)~\cite{racz2019finding}. This improves the query complexity of~\cite{mardia2020finding}, yet does not lead to a better runtime because a quasipolynomial-time exhaustive search is used to identify a large clique within the queried subgraph. The situation thus proves more subtle than it first appeared: query complexity is not the only bottleneck for runtime.

Our goal will be to show that the runtime of~\cite{mardia2020finding} is optimal, at least within some broad class of algorithms. In light of the above, we cannot merely study the (information-theoretic) query complexity, but will need to further ``tie the hands'' of the algorithm. First, for simplicity we will focus on algorithms that non-adaptively query the input. Second, we will ask that the results of the queries are processed via an efficient (say, polynomial-time) computation. Given the current state of average-case complexity theory, we cannot hope to prove negative results for arbitrary poly-time computation, so we follow a line of prior work~\cite{HS-bayesian,sos-power,hopkins2018statistical} and adopt a popular proxy for this: algorithms that can be represented as $O(\log n)$-degree polynomials. Thus we study the class of \emph{non-adaptive low-degree algorithms}: such an algorithm consists of a sequence (indexed by the problem size $n$) of \emph{masks} $M \subseteq {[n] \choose 2}$ along with a sequence of multivariate polynomials $f: \{0,1\}^M \to \mathbb{R}$ of degree $O(\log n)$ whose input variables are the revealed entries of the adjacency matrix. An algorithm of this type is considered successful at detecting the planted clique if the output of $f$ \emph{separates} (in a sense made precise by Definition~\ref{def: separ}) the null and planted distributions. Logarithmic-degree polynomials are fairly expressive, allowing computation of edge counts, triangle counts, and other small subgraph counts, as well as approximate eigenvalue computations via power iteration (see e.g.,~\cite{kunisky2022notes}). Our lower bound rules out polynomials of even larger degree, namely any $o(\log^2{n})$.

Our main result is to characterize the number of queries $|M| = \Theta(n^\gamma)$ required for a non-adaptive low-degree algorithm to detect a planted clique of size $k = \Theta(n^{1/2+\delta})$ for a constant $\delta \in (0,1/2)$. In more detail, we show (see Theorem~\ref{thm:main} to follow) that---by simulating the degree-counting algorithm of~\cite{mardia2020finding}---some non-adaptive low-degree algorithm succeeds when $\gamma > 3(1/2-\delta)$, but conversely, no non-adaptive low-degree algorithm succeeds when $\gamma < 3(1/2-\delta)$. This lets us complete the phase diagram for planted clique detection in the non-adaptive query model, shown in Figure~\ref{fig:phase-diagram}. As a result, the runtime $\widetilde{O}(n^{3(1/2-\delta)})$ of~\cite{mardia2020finding} for detecting a planted clique cannot be significantly improved without looking beyond the non-adaptive low-degree class.

 \begin{figure}[h!]
	\centering
	\begin{tikzpicture}
\draw[thick] (0, 0) rectangle (8, 4);
\filldraw[fill=BrickRed!20] (0,0) -- (0,4) -- (8,0) -- cycle;
\filldraw[fill=OliveGreen!25] (0,4) -- (6, 2) -- (8,2) -- (8,4) -- cycle;
\filldraw[fill=BurntOrange!20] (0,4) -- (6, 2) -- (8, 2) -- (8,0) -- cycle;
\draw[thick, dashed] (6, 0) -- (6, 2);
\draw[thick] (6, 0) -- (6, -0.1);
\node[below] at (6, 0) {$\gamma = \frac{3}{2}$};
\draw[thick] (8, 0) -- (8, -0.1);
\node[below] at (8, 0) {$\gamma = 2$};
\draw[thick, dashed] (4, 0) -- (4, 2);
\draw[thick] (4, 0) -- (4, -0.1);
\node[below] at (4, 0) {$\gamma = 1$};
\draw[thick] (0, 0) -- (0, -0.1);
\node[below right] at (0, 0) {$\gamma = 0$};
\draw[thick, dashed] (0, 2) -- (6.5, 2); 
\draw[thick] (0, 0) -- (-0.1, 0);
\node[above left] at (0, 0) {$\delta = -\frac{1}{2}$};
\draw[thick] (0, 2) -- (-0.1, 2);
\node[left] at (0, 2) {$\delta = 0$};
\draw[thick] (0, 4) -- (-0.1, 4);
\node[left] at (0, 4) {$\delta = \frac{1}{2}$};

\filldraw[thick, fill=OliveGreen!25] (8.4,3) rectangle (8.7, 3.3); 
\node[right] at (8.8, 3.1) {Easy};

\filldraw[thick, fill=BurntOrange!20] (8.4,2.2) rectangle (8.7, 2.5); 
\node[right] at (8.8, 2.3) {Low-degree hard};

\filldraw[thick, fill=BrickRed!20] (8.4,1.4) rectangle (8.7, 1.7); 
\node[right] at (8.8, 1.5) {Impossible};
\end{tikzpicture}
	\caption{Phase diagram for detecting a clique of size $k = \Theta(n^{1/2+\delta})$ using $|M| = \Theta(n^\gamma)$ non-adaptive queries to the adjacency matrix. If arbitrary computation is allowed on the query results, detection is impossible in the red region and possible otherwise~\cite{racz2019finding}. If a low-degree test must be applied to the query results, our upper bound Theorem~\ref{thm:main}(b) achieves detection in the green (easy) region; in this region, there is also an algorithm for clique detection whose runtime is dominated by the query complexity $\Theta(n^\gamma)$~\cite{mardia2020finding}. Below the line $\delta = 0$, it is known that low-degree polynomials cannot detect the clique, even if the entire input is revealed~\cite{barak2019nearly,hopkins2018statistical}. Our lower bound Theorem~\ref{thm:main}(a) fills in the rest of the hard (yellow) region.}
	\label{fig:phase-diagram}
\end{figure}
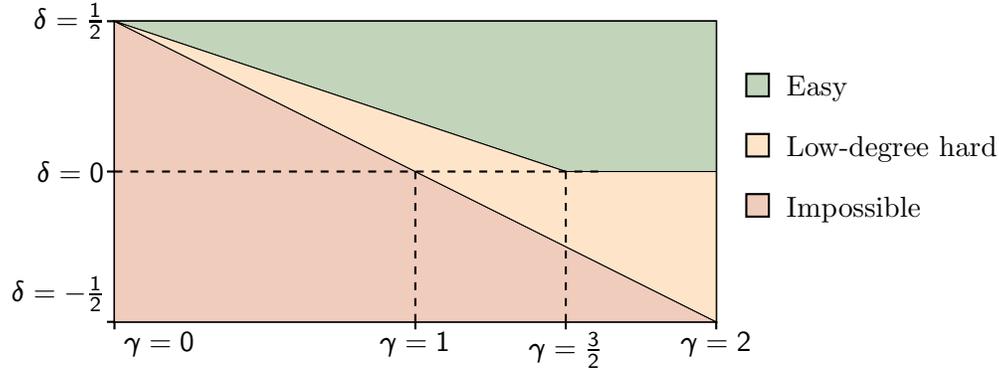

\subsection{Further related work} \label{sec:related}

\paragraph{Computational complexity of statistics.}

Statistical-computational gaps are ubiquitous throughout high-dimensional testing and inference problems. These gaps call for a theory of computational lower bounds (hardness results), as otherwise we can never be sure whether the ``possible but hard'' regime fundamentally admits no efficient algorithm or whether there is a better algorithm waiting to be discovered. For \emph{average-case} computational tasks---where the input is random---we unfortunately lack tools to prove complexity results conditional on standard assumptions such as $P \ne NP$. It is therefore common to resort to one of two tactics: (i) average-case reductions which establish hardness conditional on the hardness of some ``standard'' problem such as planted clique (e.g.,~\cite{brennan18reducibility}) or (ii) proving unconditional failure of particular families of algorithms. Within the latter viewpoint, some popular classes of algorithms to rule out include statistical query (SQ) algorithms (e.g.,~\cite{sq-clique}), the sum-of-squares (SoS) hierarchy (e.g.,~\cite{barak2019nearly}), and low-degree polynomials (the subject of this work).

We discuss briefly the prospect of applying some of the other frameworks mentioned above to our problem of interest---planted clique detection with non-adaptive queries. Average-case reductions (starting from the basic planted clique problem) have in fact already been applied to our setting, showing hardness in the same regime as us but only for certain highly structured masks~\cite{mardia2020finding}; addressing arbitrary masks appears to be beyond the reach of current techniques and we consider this an interesting question for future work. The SQ framework is not directly applicable to our setting because our input (a random graph) does not consist of i.i.d.\ samples; it may be possible to formulate a bipartite variant of our problem in the SQ model, similar to~\cite{sq-clique}. SoS lower bounds tend to be rather unwieldy to prove, even for the basic planted clique problem~\cite{barak2019nearly}, and they only show hardness of the \emph{refutation} problem (which in general need not imply hardness of detection; see~\cite{local-stats,spectral-planting}).

\paragraph{Low-degree polynomials as a model of computation.}

The idea to consider low-degree polynomials as a restricted class of statistical tests first arose from the sum-of-squares literature~\cite{barak2019nearly,HS-bayesian,sos-power,hopkins2018statistical} and has by now found success in a wide variety of settings (see~\cite{kunisky2022notes} for a survey), including extensions beyond hypothesis testing~\cite{ld-opt,schramm2022computational}. For instance, in the planted clique problem, $O(\log n)$-degree polynomials can detect a clique of size $k \gtrsim \sqrt{n}$ but provably fail to detect a clique of size $k \ll \sqrt{n}$~\cite{barak2019nearly,hopkins2018statistical}, suggesting that this threshold is a fundamental barrier for efficient computation (or more conservatively, a barrier for certain known approaches). For planted clique and various other inference problems of this style, low-degree polynomials capture the best known poly-time algorithms and give a rigorous explanation for apparent computational barriers.

Our work is the first to employ low-degree polynomials to probe the precise limits of sublinear computation. Prior work has addressed coarser questions about runtime by taking polynomial degree as a proxy for runtime, e.g., with degree $n^\delta$ corresponding to time $\exp(n^{\delta \pm o(1)})$ (see e.g.,~\cite{ding2023subexponential}). In our regime of sublinear runtime, we cannot hope for a meaningful correspondence between polynomial degree and runtime, since even a degree-1 polynomial can already read the entire input. Instead, our approach relies on explicitly restricting the algorithm to a small fraction of the input variables.

On a technical level, we use the standard \emph{low-degree likelihood ratio} (see~\cite{hopkins2018statistical}) as a tool for ruling out all low-degree polynomial tests. For testing between a specific pair of planted and null distributions, this often boils down to a relatively straightforward computation. However, since we allow an arbitrary choice of mask, we effectively need to prove \emph{many} such hardness results all at once. To complicate things further, our setting requires a \emph{conditional} variant of the low-degree likelihood ratio~\cite{fp,coja2022statistical,dhawan2023detection,ding2023low}. We give an overview of the proof in Section~\ref{sec:proof-overview}.

\paragraph{Average-case fine-grained hardness.} In analogy with classical worst-case to average-case reductions (see e.g.,~\cite{ajtai1996generating,regev2010learning}), a recent line of literature establishes fine-grained notions of average-case complexity~\cite{ball2017average,dalirrooyfard2020new} via worst-case to average-case reductions for problems such as counting cliques in random hypergraphs~\cite{goldreich2018counting,boix2021average} and counting bicliques in random bipartite graphs~\cite{hirahara2021nearly}.  Our work departs from these along two axes: First, our hardness results are unconditional but restricted to algorithms which can be expressed as low-degree polynomials. Second, as opposed to a counting problem, here we consider a testing problem which appears to exhibit a statistical-computational gap. To the best of our knowledge, our work (along with our previous work~\cite{mardia2020finding}) is the first to provide such evidence for fine-grained hardness of testing.

\paragraph{Restricting algorithms via query complexity.} While restricting algorithms via query complexity forms a dominant theme in the study of sublinear-time algorithms and property testing (see, e.g.,~\cite{rubinfeld2011sublinear} and~\cite{goldreich2017intro} for comprehensive accounts), we note that the restriction to algorithms which make non-adaptive queries can be alternatively motivated in its own right (without reference to sublinear runtime).  In particular, this models a scenario where the statistician must decide upfront which data to collect. This is relevant in (for instance) the \emph{group testing} problem where the goal is to identify which individuals are afflicted by a disease based on ``pooled'' tests (see, e.g.,~\cite{aldridge2019group} for a survey). It is realistic to assume that the subset of individuals included in each test must be chosen non-adaptively (without knowledge of other test results) for purposes of practical implementation. Unlike our problem, there is no statistical-computational gap in group testing: a particular choice for the ``design'' (the choice of subsets to test, analogous to our ``mask'') succeeds using the information-theoretic minimum number of tests~\cite{coja2020optimal}.  

Turning to query complexity in problems on random graphs,~\cite{ferber2016finding,conlon2020online,alweiss2021subgraph} consider the \emph{subgraph query problem}, that is, the query complexity of finding a fixed subgraph in a sufficiently large Erd\H{o}s--R{\'e}nyi random graph.  Following this line of work,~\cite{feige2020finding} study the query complexity of finding a large clique in an Erd\H{o}s--R{\'e}nyi graph.  Importantly,~\cite{feige2020finding} allow for a limited amount of adaptivity in their queries, allowing a constant number of rounds in which the queries in each round may depend on the result of previous rounds.  Tighter bounds on the query complexity upon restricting the number of rounds of adaptivity were later obtained by~\cite{feige2021tight,csoka2023finding}.  

Information-theoretic (potentially adaptive) query complexity limits were established for the planted variant of the problem which we study here by~\cite{racz2019finding} (see also~\cite{rashtchian2021average} for an alternate proof of the lower bound via communication complexity).  The more general problem of random subgraph detection was later studied by~\cite{huleihel2021random} in which an analogous information-theoretic threshold of $\widetilde{\Theta}(n^2/k^2)$ was established as well as a polynomial-time algorithm which requires $\widetilde{\Omega}(n^3/k^3)$ queries to succeed. In contrast with this last line of work, we consider the planted variant of the problem and provide \emph{restricted} query complexity lower bounds over the family of algorithms which can be expressed as low-degree polynomial functions of the input.  These lower bounds in turn match existing algorithmic upper bounds in the literature~\cite{mardia2020finding}.

\subsection{Open problems} \label{sec:open}
A number of interesting directions remain open and we detail a few here.  
\begin{itemize}
    \item \emph{Reaching the computational threshold.}  While our results indicate a smooth tradeoff between clique size and runtime above the computational threshold, it is less clear what happens near the computational threshold $k \approx \sqrt{n}$. When $k = \omega(\sqrt{n \log n})$, the algorithm of~\cite{mardia2020finding} runs in time $\widetilde{O}(n^{3/2})$.  However, when the clique has size $k = o(\sqrt{n \log n})$ so that degree counting no longer works, but is still above the computational threshold $k = \Omega(\sqrt{n})$, it is unknown whether detection is possible in \emph{strongly sublinear} time, i.e., time $O(n^{2-\epsilon})$ for a constant $\epsilon > 0$.
  
    \item \emph{Adaptivity.} Our results provide a lower bound on the capability of non-adaptive low-degree sublinear-time algorithms to detect a planted clique. It is natural to wonder whether adaptivity helps, or whether our lower bounds can be extended to adaptive algorithms. One fruitful direction may be to define a restricted family of adaptive algorithms---perhaps one where low-degree polynomials govern how queries are selected adaptively---and prove lower bounds against this family. We note that while adaptivity does not appear to help for detection, it does appear to help for the related problem of recovering the clique vertices. That is, the degree counting algorithms of~\cite{mardia2020finding} for detection are non-adaptive, but their natural counterparts for recovery require adaptive queries.

    \item \textit{Beyond low-degree polynomials.} Our result provides evidence for a query-complexity based statistical-computational gap for the planted clique problem. It would be nice to gain more evidence for this by proving similar hardness results for algorithmic classes beyond low-degree polynomials. We believe the series of steps involved in our impossibility result should be useful even when considering other algorithmic classes. At the end of Section~\ref{sec:proof-overview} we briefly mention which aspect of our approach is specific to low-degree polynomials. It would be very interesting to implement this step for other algorithmic classes. Our ideal goal would be to show that the algorithm of \cite{mardia2020finding} is essentially optimal, conditional on the planted clique conjecture.
    
\end{itemize}

\section{Problem setup and main results} \label{sec:main-res}

As previously alluded to, we will consider \emph{non-adaptive low-degree algorithms}. When the full input is a graph on $n$ vertices, a non-adaptive algorithm must specify the subgraph to be queried before any observations are made.  We formalize this through the notion of a mask, defined presently.

\begin{definition}[Mask and mask degree]\label{def:mask}\ \\
	A mask $M$ over a ground set $[n]$ is a subset of ${[n] \choose 2}$. We interpret this subset as specifying a graph on vertex set $V(M) := \left\{v \in [n] : \exists u \in [n] \text{ such that } (u,v) \in M \right\}$ with edge set $M$. Whenever the ground set is clear from context, we will not mention it and suppress it in our notation.  For any vertex $v \in V(M)$ we refer to its mask degree as ${\sf deg}^M(v) := \left\lvert \left\{ u \in V(M) : (u,v) \in M \right\} \right\rvert$.
\end{definition}
Equipped with this definition, we turn to our null distributions $G(n, 1/2)$ and $G(n, M)$, which denote the Erd\H{o}s--R{\'e}nyi distribution and its masked counterpart, respectively.

\begin{definition} [Erd\H{o}s--R{\'e}nyi distribution $G(n, 1/2)$ and masked Erd\H{o}s--R{\'e}nyi distribution $G(n, M)$] \label{def:er}
	$G(n, 1/2)$ is the uniform distribution on $\{+1, -1\}^{\binom{n}{2}}$.  We interpret a sample from this distribution as describing a graph $G$ with vertex set $[n]$ in which the edge $(i, j)$ is present if and only if the entry indexed by the unordered pair $(i, j)$ is $+1$. Moreover, given a mask $M$, the masked Erd\H{o}s--R{\'e}nyi distribution $G(n, M)$ denotes the marginal distribution of $G(n, 1/2)$ restricted to the coordinates in $M$. 
\end{definition}

Turning to our planted distributions, we first define the clique distribution and its conditional counterpart (which will be important as a proof device).
\begin{definition}[Clique indicator distributions: $\Clique(n, k)$ and $\Clique(n, k, S)$]\label{def:cl-ind}\ \\
	$\Clique(n, k)$ denotes the uniform distribution over vectors in $\{0,1\}^{[n]}$ that have exactly $k$ nonzero coordinates. We interpret a sample from this distribution as a choice of which vertices belong to the planted clique. For any subset $S \subseteq [n]$, $\Clique(n, k, S)$ denotes the distribution $\Clique(n, k)$ conditioned on all nonzero coordinates being inside $S$.
\end{definition}

Now we define our planted distributions $G(n, 1/2, k)$ and $G(n, k, M)$.

\begin{definition}[Planted Clique distributions: $G(n, 1/2, k)$ and $\G(n,k,M)$]\label{def:mask-pc}\ \\
	Let $K$ be sampled from $\Clique(n, k)$. Then $\G(n, 1/2,k)$ is the following distribution on $\{+1,-1\}^{[n] \choose 2}$.
	\begin{enumerate}
		\item Coordinates corresponding to unordered pairs $(i,j)$ with $K_i = K_j = 1$ are $+1$.
		\item Every other coordinate is independent and uniform on $\{+1,-1\}$.
	\end{enumerate}
    We again interpret a sample from this distribution as a graph on vertex set $[n]$, where $+1$ denotes the presence of an edge. When restricted to the coordinates in the mask $M$, the distribution is denoted by $G(n, k, M)$.
\end{definition}
Given a graph $G$ restricted to the coordinates in the mask $M$, our task, then, is to determine whether the observations originated from the null distribution $G(n, M)$ or the planted distribution $G(n, k, M)$.

Our main result provides a tight condition on the size of the mask $M$ which controls whether or not \emph{separation}---defined presently---between the null distribution $G(n, M)$ and the planted distribution $G(n, k, M)$ is possible using a low-degree test. Our results are asymptotic in nature, and thus we will often refer to a \emph{sequence} of problems (or algorithms, etc.), which are assumed to be indexed by the problem size $n$.

\begin{definition}[Strong/weak separation]\label{def: separ}
	Consider a sequence $N = N_n$ and two (sequences of) distributions $\mathbb{P}$ and $\mathbb{Q}$ on $\mathbb{R}^N$. A sequence of polynomials $f: \mathbb{R}^N \rightarrow \mathbb{R}$ separates $\mathbb{P}$ and $\mathbb{Q}$ weakly if as $n \rightarrow \infty$,
	\begin{align}\label{def:weak-separation}
		\Bigl(\max\bigl\{\Var_{\PP}(f), \Var_{\QQ}(f)\bigr\}\Bigr)^{1/2} = O\bigl( \bigl \lvert \EE_{\PP}[f] - \EE_{\QQ}[f] \bigr \rvert \bigr), \tag{Weak separation}
	\end{align}
	and strongly if as $n \rightarrow \infty$, 
	\begin{align}\label{def:strong-separation}
		\Bigl(\max\bigl\{\Var_{\PP}(f), \Var_{\QQ}(f)\bigr\}\Bigr)^{1/2} = o\bigl( \bigl \lvert \EE_{\PP}[f] - \EE_{\QQ}[f] \bigr \rvert \bigr). \tag{Strong separation}
	\end{align}
\end{definition}

As is standard in the low-degree testing literature, we take separation as the definition of ``success'' for low-degree tests. Separation is a natural sufficient condition that allows two distributions to be distinguished using the output of a polynomial. Specifically, strong separation implies (by Chebyshev's inequality) that $\PP,\QQ$ can be distinguished with probability $1-o(1)$, and weak separation implies that $\PP,\QQ$ can be distinguished with nontrivial advantage over a random guess; see~\cite{fp}. We are now in position to state our main result.  

\begin{theorem} \label{thm:main}
	Fix constants $0 < \delta < 1/2$ and $0 < \gamma < 2$. Consider a sequence $k =  \Theta(n^{1/2+ \delta})$.
	\begin{itemize}
		\item[(a)] (Lower bound) If $\gamma < 3(1/2 - \delta)$ then for any sequence of masks with $|M| = O(n^\gamma)$, any sequence of degree-$o(\log^2 n)$ polynomials fails to weakly separate $G(n,M)$ and $G(n,k,M)$.
		\item[(b)] (Upper bound) If $\gamma > 3(1/2 - \delta)$ then there exists a sequence of masks with $|M| = O(n^\gamma)$ and a sequence of polynomials with constant degree $C(\gamma, \delta)$ that strongly separates $G(n,M)$ and $G(n,k,M)$.
	\end{itemize}
\end{theorem}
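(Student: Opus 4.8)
\textbf{Part (b): the upper bound.} The plan is to explicitly construct a mask and a low-degree polynomial by simulating the degree-counting algorithm of~\cite{mardia2020finding}. Since $\gamma > 3(1/2-\delta)$, pick a parameter $m = \Theta(n^{\alpha})$ with $\alpha$ chosen so that we query the edges between a set $A$ of $m$ ``probe'' vertices and a set $B$ of $m$ ``anchor'' vertices, giving $|M| = m^2 = \Theta(n^{2\alpha})$; we want $2\alpha < \gamma$ and $\alpha$ large enough to make the test work. For each probe vertex $v \in A$ we form its empirical degree into $B$, namely $d_v := \sum_{u \in B} \tfrac{1}{2}(1 + G_{uv})$, which is a degree-$1$ polynomial in the revealed entries. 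Under $G(n,M)$ each $d_v$ is $\mathrm{Bin}(m,1/2)$-distributed; under $G(n,k,M)$, conditioned on the clique hitting $A$ (which happens for $\approx km/n$ probes) and $B$ (the clique intersects $B$ in $\approx km/n$ vertices), those probes have inflated mean $\approx m/2 + km/(2n)$. The bias-to-standard-deviation ratio for a single clique probe is $\Theta(k/n \cdot m / \sqrt m) = \Theta(k \sqrt m / n) = \Theta(n^{\alpha/2 - (1/2-\delta)})$. To amplify this into strong separation we take $f = \sum_{v \in A} (d_v - m/2)^C$ for an even constant $C = C(\gamma,\delta)$; the clique-hitting probes contribute a signal of order $(km/n)\cdot (m/2)^{C/2}\cdot (\text{bias ratio})^C$-type term while the variance is controlled by the $\approx m$ non-clique probes. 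Choosing $C$ large enough (depending on how far $\alpha$ sits above $1/2-\delta$) pushes the signal-to-noise ratio to $\omega(1)$, yielding strong separation; the degree of $f$ is $C$, a constant. The only care needed is to verify that the random clique-intersection sizes $|K \cap A|, |K \cap B|$ concentrate (binomial tails), so that the ``$\approx$'' statements hold with probability $1-o(1)$ and the low-probability bad events do not spoil the variance computation — this is routine.

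\textbf{Part (a): the lower bound.} The plan is to bound the \emph{conditional low-degree likelihood ratio}. Because the planted distribution is not a product measure (the planted edges inside the clique are deterministically $+1$), one works with $G(n,k,S,M)$, the planted model conditioned on the clique lying inside a well-chosen ``typical'' set $S$ of vertices (this is the role of $\Clique(n,k,S)$ in Definition~\ref{def:cl-ind}): on the typical event the clique vertices are spread out so that no mask vertex has too many clique-neighbors, killing the pathological high-moment contributions that make the unconditional LDLR blow up. It suffices to show that for every mask with $|M| = O(n^\gamma)$ and every degree-$D$ polynomial with $D = o(\log^2 n)$, the conditional $\chi^2$-type quantity $\|L^{\le D}\|^2 - 1 = \sum_{\emptyset \ne \alpha,\ |\alpha|\le D} \widehat{L}(\alpha)^2$ stays bounded; a result from the conditional-LDLR framework (e.g.~\cite{fp,ding2023low}) then rules out weak separation. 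Using the Fourier expansion over the revealed coordinates $M$, each coefficient $\widehat L(\alpha)$ is governed by $\Pr[\text{all edges in } \alpha \text{ are planted}]$, which is nonzero only when $\alpha$, viewed as a subgraph of the mask, has every edge inside the clique — i.e. $\alpha$ is contained in the clique. So the sum becomes a weighted count, over subsets $T$ of mask vertices that could be an ``all-clique'' configuration, of (roughly) $(k/n)^{|V(T)|}$ times the number of ways to choose $|\alpha| \le D$ mask edges spanned by $T$. Organizing by $t = |V(T)|$, the contribution is $\sum_t (k/n)^{t} \cdot \#\{\text{cliques on } t \text{ mask-vertices realized in } M\}$, and one needs this to be $1 + o(1)$ for every mask of the allowed size.

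\textbf{The structural reduction, and the main obstacle.} The heart of the argument — and the place I expect the real work to be — is the second half of the abstract's claim: using (the bound on) the conditional LDLR as a potential function to show that \emph{every} mask is dominated by a highly structured one of the same size. Concretely, the plan is to argue that the worst mask for the lower bound is (a union of) cliques/complete-bipartite-like blocks, since spreading $|M|$ edges out only decreases the number of small cliques they span, hence decreases the LDLR sum. I would set up an exchange/compression argument: given an arbitrary mask, find two mask-vertices and reroute edges so as to increase local edge-density without increasing $|M|$, showing the relevant potential (the truncated sum $\sum_t (k/n)^t N_t(M)$, where $N_t$ counts $t$-vertex cliques in $M$ with an assigned set of $\le D$ spanned edges) is non-decreasing under this operation. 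Once the extremal mask is a disjoint union of cliques of sizes $b_1, b_2,\dots$ with $\sum_i \binom{b_i}{2} = O(n^\gamma)$, the LDLR sum is explicitly $\sum_i \sum_{t \le D} (k/n)^t \binom{b_i}{t}\binom{\binom{t}{2}}{\le D}$-style, and the condition for this to be $1 + o(1)$ reduces to an inequality in $b_i$; the binomial sum is dominated by $t$ around $\min(b_i, D)$, and $(k/n)^t \binom{b_i}{t} \le (kb_i/n)^t / t!$, which is controlled once $kb_i/n \ll 1$, i.e. $b_i \ll n/k = n^{1/2-\delta}$. A single clique can have at most $\binom{b_i}{2} = O(n^\gamma)$ edges, so $b_i = O(n^{\gamma/2})$, and the constraint $\gamma/2 < 1/2 - \delta$, i.e. $\gamma < 2(1/2-\delta)$, is what one would naively get — but the correct threshold is $\gamma < 3(1/2-\delta)$, so the decisive subtlety is that the $D = \Theta(\log^2 n)$ degree budget, combined with the fact that a mask vertex can only be in one large block ``for free'' but contributes to the sum through its shared edges, buys an extra factor; chasing exactly where the exponent $3$ (rather than $2$) comes from — presumably because each of the $\le D$ chosen edges of an all-clique configuration lies in a size-$t$ clique that costs $(k/n)^t$ but only $\binom{t}{2} \approx t^2$ edges are available, so the ``rate'' is $(k/n) \cdot (\text{edges per vertex})$ and an edge-budget of $n^\gamma$ translates to a vertex-budget of $n^{\gamma}$ per block in the worst concentrated case — is the crux of the calculation and the step I would spend the most care on. The one genuinely low-degree-specific ingredient is the final implication ``bounded conditional LDLR $\Rightarrow$ no weak separation by degree-$D$ polynomials''; for another algorithmic class this step would have to be replaced, but everything upstream (the conditioning, the Fourier count, the compression to structured masks) is model-agnostic.
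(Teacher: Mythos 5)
Your high-level strategy for both parts matches the paper's, but each part has a genuine gap in exactly the place where the exponent $3$ is determined.

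\textbf{Part (b).} The polynomial $f = \sum_{v \in A}(d_v - m/2)^C$ does not achieve the threshold $3(1/2-\delta)$. The obstruction is that a centered power retains the full binomial fluctuation of the degree estimate under the null: if $\sigma^2 \approx |B|/4$ is the variance of $d_v - |B|/2$ and $\mu \approx k|B|/(2n)$ is the bias for a clique probe, then (for $\mu \ll \sigma$, the relevant regime near the threshold) the signal per clique probe is $\Theta(\mu^2 \sigma^{C-2})$ while the noise per probe is $\Theta(\sigma^C)$, giving per-probe SNR $\Theta((\mu/\sigma)^2)$ \emph{independently of $C$}; raising $C$ does not amplify. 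Combining this with the Bernoulli variance from whether each probe lands in the clique (which forces $|A| \gtrsim n/k$) one finds that the best achievable condition with a power polynomial — even after optimizing the aspect ratio $|A| \ne |B|$ — is $|A|\,|B|^2 \gtrsim (n/k)^6$ together with $|A| \gtrsim n/k$, which forces $|M| = |A||B| \gtrsim (n/k)^{7/2}$, i.e.\ $\gamma > 3.5(1/2-\delta)$; your square choice $|A|=|B|=m$ is worse still, giving $\gamma > 4(1/2-\delta)$. What the paper does instead is apply the Schramm--Wein polynomial approximation to the threshold function, $\tau_\ell$ from \cite{schramm2022computational}, to $g_i(Y)$ and take $f = \sum_{i\in V_L} \tau_\ell(g_i(Y))$ with the \emph{rectangular} mask $|V_L| = L \asymp \sqrt{R}$, $|V_R| = R$. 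Because $\tau_\ell(g_i)$ approximates the indicator $K_i$ up to error $o(k/n)$, its null variance is $o((k/n)^2)$ rather than $\Theta(\sigma^{2C})$; the null fluctuations are suppressed, the only variance left is the $\Theta(Lk/n)$ Bernoulli term, and the threshold reduces to $L \gg n/k$ (probes) and $R \gg (n/k)^2$ (degree concentration), yielding $|M| = LR \gg (n/k)^3$. The saturating polynomial is indispensable, not a convenience.

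\textbf{Part (a).} You correctly identify the three-step skeleton (condition on a typical clique location, exchange/compression to a structured mask, analytic bound), and you correctly note that the naive disjoint-union-of-cliques calculation gives only $\gamma < 2(1/2-\delta)$ and that ``chasing exactly where the exponent $3$ comes from'' is the crux — but you do not actually carry out that step, and the mechanism you speculate about (the degree budget $D$) is not the one that supplies the missing factor. The factor comes from the \emph{max-degree restriction induced by the conditioning}. Conditioning the clique to avoid the (few) vertices of $M$-degree larger than $2t$, with $t \asymp |M|\cdot(k/n)\cdot\log n$, is a $(1-o(1))$-probability event precisely because there are at most $|M|/t \asymp (n/k)/\log n$ such vertices. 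After conditioning, the effective mask has maximum degree $O(|M|k/n)$. The paper's $\mathsf{Donate}$ operation (moving all of $v$'s private neighbors to $u$) is the precise exchange move that monotonically increases the moment-based potential $\mathsf{LDUB}(n,M)$, and iterating it reduces to a mask on at most $2t + 2 + 2|M|/t$ vertices \emph{while preserving the max-degree bound}. The analytic bound on $\mathsf{LDUB}$ for a mask with $b$ vertices requires $b\,(k/n)^2 = o(1)$, and since $b \lesssim t + |M|/t \lesssim |M|(k/n)\log n + (n/k)/\log n$, the binding condition is $|M|(k/n)^3\log n = o(1)$, i.e.\ $\gamma < 3(1/2-\delta)$. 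Equivalently in your clique-block picture: a size-$b$ block has max degree $\approx b$, so conditioning forces $b \lesssim |M|k/n$, not merely $b \lesssim \sqrt{|M|}$ from the edge count, and combining $b \lesssim |M|k/n$ with $b \ll (n/k)^2$ gives $|M| \ll (n/k)^3$. Two smaller notes: the LDLR norm squared weights a term $\alpha$ by $\widehat{L}(\alpha)^2 \approx (k/n)^{2|V(\alpha)|}$, not $(k/n)^{|V(\alpha)|}$; and the sum runs over all edge-subsets $\alpha \subseteq M$ with $|\alpha| \le D$, not only those whose vertex set spans a clique of $M$. Neither slip changes the shape of the argument, but the missing degree-restriction step does.
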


We provide the proof of the lower bound in Section~\ref{sec:proof-lb} and the proof of the upper bound in Section~\ref{sec:proof-ub}. The upper bound essentially simulates the degree counting algorithm of~\cite{mardia2020finding} using polynomials. The lower bound is our main contribution, and we now turn to an overview of its proof.

\subsection{Overview of the lower bound proof}\label{sec:proof-overview}

The main challenge in proving our hardness result lies in establishing the failure of low-degree polynomials for an \textit{arbitrary} mask with a small number of (at most $O(n^\gamma)$) edges.

\subsection*{Step 1: Reducing to masks with small maximum mask degree}

We gain intuition about algorithmically useful masks by studying~\cite{mardia2020finding}'s sublinear-time algorithm.  This algorithm is based on~\cite{kuvcera1995expected}'s observation that for large planted clique sizes (e.g.\ $k = n^{1/2 + \delta}$), with high probability the degree of all planted clique vertices is much larger than the degree of all  non-clique vertices.  Consider the first $(n/k)\cdot \mathsf{polylog}(n)$ vertices. With high probability, if the graph were drawn from the planted distribution $G(n, 1/2, k)$ (see Definition~\ref{def:mask-pc}), then at least one of these vertices will belong to the planted clique and have large degree.  By contrast, if the graph were drawn from the null distribution $G(n, 1/2)$, all of these vertices will have small degree.

Simply estimating the degree of each of these $(n/k) \cdot \mathsf{polylog}(n)$ vertices and checking if any of them is `large enough' to be a planted clique vertex will let us distinguish between the null and planted cases. \cite{mardia2020finding}'s observation was that even just an estimate of these degrees obtained by subsampling $O(n^2/k^2)$ potential neighbours (instead of computing the degree exactly by looking at all $n-1$ potential neighbours) is good enough to distinguish between the planted and null cases.

Clearly any mask that allows for such an estimate for enough vertices will be algorithmically useful. Luckily, since $k = \Theta(n^{1/2+\delta})$ and  $\gamma < 3(1/2-\delta)$, our mask does not have enough mask edges to query $\Omega(n^3/k^3)$ entries of the adjacency matrix. In fact, this means that only $o(n/k)$ vertices can hope to have a `large' mask degree $\Omega(n^2/k^2)$. But with probability $1-o(1)$, because the planted clique vertices are chosen uniformly at random, none of these vertices with large mask degree will be planted vertices even in the planted case.

As a result, for masks with too few ($O(n^\gamma)$) mask edges, we can safely ignore vertices with `large' mask degree, as those vertices will behave the same under both the null and planted distributions (with high probability), and intuitively this means they carry no useful information.  Hence it suffices to show hardness just for masks with a small maximum mask degree. Formally, we implement this reduction via the conditional low-degree likelihood method (see, e.g.,~\cite{fp}, Proposition 6.2): we condition on the high-probability event that no vertices of `large' mask degree belong to the planted clique\footnote{Conditioning is necessary here: Consider the mask which consists solely of all edges connected to the first vertex.  This mask is clearly insufficient to detect the planted clique, yet the corresponding low-degree likelihood ratio blows up.}.

\subsection*{Step 2: Reducing to masks with few mask vertices}
Consider two masks, both of size $\widetilde{\Theta}(n^3/k^3)$, the first mask being any mask of this size that lets us estimate enough degrees as well as needed for \cite{mardia2020finding}'s algorithm, and the second being a `square' mask consisting of all potential mask edges involving only a fixed set of $\widetilde{\Theta}(n^{3/2}/k^{3/2})$ vertices. We know that the former mask is algorithmically useful, but the latter mask is not, as we later discuss in Step 3.

To express our takeaway from this, we should consider the mask degree distribution. For two masks with the same number of mask edges, the average of this distribution will be the same. However, the example above indicates that for masks with the same average mask degree, having a `more skewed' (lots of large mask degrees as well as small mask degrees) degree distribution is more useful than having a `more uniform' one.

Within the context of separation by low-degree polynomials, `algorithmic utility' can be quantified by the norm of the low-degree likelihood ratio\footnote{Technically in all our lemmas we will work with a natural upper bound to this quantity, but that detail is unimportant for this overview.}, which is a standard quantity (see e.g.,~\cite{hopkins2018statistical}) defined in~\eqref{eq:ld-norm-def}. We will use the intuition above to upper bound the `algorithmic utility' (norm of the low-degree likelihood ratio) of a mask $M$ by that of a closely related mask $M'$. We will obtain the mask $M'$ through a small tweak to $M$ that slightly skews its mask degree distribution while keeping the total number of mask edges the same. 

If we repeat this process iteratively and use the fact that \textit{we are only interested in masks with small maximum mask degree}, we can show that the `algorithmic utility' of every such mask is upper bounded by the `algorithmic utility' of a mask with only a few vertices. This reduction is the main technical contribution of our work. 

\subsection*{Step 3: Analytically upper bounding the norm of the low-degree likelihood ratio for masks with few mask vertices}

Once we know we only need to upper bound the `algorithmic utility' (norm of the low-degree likelihood ratio) of masks with few vertices, we are ready to conclude.  In particular, calculating such an upper bound analytically proves tractable using standard techniques, which gives our desired result.

\textbf{Remark:} Of the three steps above, Step 1 and Step 3 have natural analogues even when considering algorithmic classes other than low-degree polynomials.  Only Step 2 seems to crucially rely on properties of low-degree polynomials. 
As alluded to in Section~\ref{sec:open}, it would be interesting to see if the above reduction program can be carried out for other algorithmic classes. In particular, it would be nice to show that---under the planted clique conjecture---the algorithm of \cite{mardia2020finding} is essentially optimal among non-adaptive algorithms.

\section{Proof of the lower bound: Theorem~\ref{thm:main}(a)} \label{sec:proof-lb}

In this section, we provide a sequence of lemmas implementing the strategy outlined in Section~\ref{sec:proof-overview}.  This section culminates in the proof of the lower bound.  We first require the following two definitions.

\begin{definition}[Low-degree likelihood ratio upper bound: ${\sf LDUB}(n,M)$]\label{def: ldub}\ \\
	Given integers $n$, $k \leq n$, $D$ and a mask $M$ on ground set $[n]$, let $X$ and $X'$ be two independent draws from $\Clique(n, k)$ (as in Definition~\ref{def:cl-ind}). For $i \in [n]$, let $Z_i := X_i \cdot X'_i$. The low-degree likelihood ratio upper bound at degree $D$ is defined as
	\[
	{\sf LDUB}(n,M) := 1 + \sum\limits_{d = 1}^{D} \frac{1}{d!}\cdot \mathbb{E}\biggl[\Bigl(\sum\limits_{(i,j) \in M}Z_i \cdot Z_j\Bigr)^d\biggr] .
	\]
	The values of $k$ and $D$ will always be clear from context, so we suppress them and denote the quantity as just ${\sf LDUB}(n,M)$ to simplify notation.
\end{definition}

\begin{definition}[Conditional low-degree likelihood ratio upper bound: ${\sf Cond}(n,M,S)$]\label{def: cldub}\ \\
	Given integers $n$, $k \leq n$, $D$, a mask $M$ on ground set $[n]$, and a subset $S \subseteq [n]$, the conditional low-degree likelihood ratio upper bound ${\sf Cond}(n,M,S)$ is defined analogously to the low-degree likelihood ratio upper bound ${\sf LDUB}(n,M)$ with one crucial difference. The independent random vectors $X$ and $X'$ (as in Definition~\ref{def: ldub}) are drawn from $\Clique(n, k, S)$ (rather than $\Clique(n, k)$). The rest of the definition proceeds as in Definition~\ref{def: ldub}.\footnote{Cleary, ${\sf Cond}(n,M,[n]) = {\sf LDUB}(n,M)$.}
\end{definition}

The motivation behind these definitions is the following. There is a standard quantity, the \emph{norm of the (conditional) degree-$D$ likelihood ratio}, to be defined in~\eqref{eq:ld-norm-def}. It is well known that if this quantity is $1+o(1)$ then this implies our goal: degree-$D$ polynomials cannot achieve weak separation; see~\cite[Proposition~6.2]{fp}. Following~\cite[Proposition~B.1]{spectral-planting}, ${\sf Cond}(n,M,S)$ provides a convenient upper bound on this quantity.

\begin{algorithm}[H]
	\SetAlgoLined
	\tcp{Give as many edges from $v$ to $u$ as possible}
	\KwIn{Mask $M$, donating vertex $v \in V(M)$, receiving vertex $u \in V(M)$}
	\KwOut{Mask ${\sf Donate}(M,v \rightarrow u)$}
	\textbf{initialize} ${\sf Donate}(M,v \rightarrow u) = M$
	
	\For{ vertex $s \in V(M) \setminus \{v,u\}$}{\If{ edge $(s,v) \in M$ and edge $(s,u) \notin M$ }{
			
			remove $(s,v)$ from ${\sf Donate}(M,v \rightarrow u)$
			
			add $(s,u)$ to ${\sf Donate}(M,v \rightarrow u)$ }
		
	}
	\textbf{output} ${\sf Donate}(M,v \rightarrow u)$

	\caption{ ${\sf Donate}(M,v \rightarrow u)$}
	\label{alg:donation}
\end{algorithm}

\begin{fact}[Donation leaves certain mask properties (almost) unchanged]\label{fact: conservation-donation}\ \\
	Let $M$ be a mask with vertices $v,u \in V(M)$. It is an easy observation about Algorithm~\ref{alg:donation} that
	\begin{enumerate}
		\item ${\sf Donate}(M,v \rightarrow u)$ has the same number of edges as $M$.
		\item $V({\sf Donate}(M,v \rightarrow u))$ must be either $V(M)$ or $V(M) \setminus v$.\\
	\end{enumerate}
\end{fact}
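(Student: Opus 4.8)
The plan is to prove both parts of Fact~\ref{fact: conservation-donation} by directly tracing through the single loop in Algorithm~\ref{alg:donation}, tracking how the edge set and the vertex set change across iterations. No auxiliary results are needed.

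For part 1, I would note that the body of the loop is executed once per vertex $s \in V(M) \setminus \{v, u\}$, and on each such iteration it either does nothing, or it performs exactly one edge removal (of $(s,v)$) together with exactly one edge addition (of $(s,u)$). Hence it suffices to check that no edge is removed twice, added twice, or removed after being added (or added after being removed), so that the net change in the number of edges over all iterations is zero. This follows because every removed edge is incident to $v$ and every added edge is incident to $u$, while $u \neq v$; moreover the iteration for vertex $s$ only touches the pair of potential edges $(s,v)$ and $(s,u)$, so distinct iterations act on disjoint sets of edges. The same disjointness observation shows that the output does not depend on the order in which the vertices $s$ are processed, and that it is immaterial whether the guard ``$(s,v) \in M$ and $(s,u) \notin M$'' is read with respect to the original $M$ or the partially updated working mask. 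Therefore ${\sf Donate}(M,v \rightarrow u)$ has exactly as many edges as $M$.

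For part 2, I would establish containment in both directions. First, every edge added by the algorithm has the form $(s,u)$ with $s \in V(M)$ and $u \in V(M)$, so no new vertex is ever created, giving $V({\sf Donate}(M,v \rightarrow u)) \subseteq V(M)$. Second, the only edges ever removed have the form $(s,v)$: the vertex $u$ is never an endpoint of a removed edge, and whenever an edge $(s,v)$ is removed the edge $(s,u)$ is added in the same iteration, so $s$ still has an incident edge afterwards. Consequently no vertex other than $v$ can leave the vertex set, giving $V({\sf Donate}(M,v \rightarrow u)) \supseteq V(M) \setminus \{v\}$. Combining the two inclusions shows that $V({\sf Donate}(M,v \rightarrow u))$ equals $V(M)$ or $V(M) \setminus \{v\}$, according to whether $v$ retains an incident edge after all swaps.

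There is essentially no hard step here; the only point requiring a little care is the bookkeeping in part 1 — ruling out an edge being re-added or re-removed across iterations, and checking that the loop guard is unambiguous — which is precisely what the disjointness observation above resolves.
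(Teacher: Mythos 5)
Your proof is correct and, since the paper simply asserts this as an ``easy observation about Algorithm~\ref{alg:donation}'' without giving any argument, your careful tracing of the loop is exactly the kind of justification the authors leave implicit. In particular, your disjointness observation (that the iteration for $s$ only touches the potential edges $(s,v)$ and $(s,u)$, which are pairwise distinct across iterations) is the right reason both that the edge count is conserved and that the guard's reference to the original $M$ versus the working mask is immaterial.
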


\begin{figure}[h!]
	\centering

\begin{tikzpicture}
	
	\node[circle,fill=black,inner sep=0pt,minimum size=5pt] at (-4.5, 1) {};
	\node[color=black, above] at (-4.5, 1.02) {$v$};
	\node[circle,fill=black,inner sep=0pt,minimum size=5pt] at (-2, 1) {};
	\node[color=black, above] at (-2, 1.02) {$u$};
	\node[circle,fill=black,inner sep=0pt,minimum size=5pt] at (-6.5, -1) {};
	\node[circle,fill=black,inner sep=0pt,minimum size=5pt] at (-6.3, -1) {};
	\node[circle,fill=black,inner sep=0pt,minimum size=5pt] at (-6.1, -1) {};
	\node[circle,fill=black,inner sep=0pt,minimum size=5pt] at (-5.9, -1) {};
	\draw[] (-4.5, 1) -- (-6.5, -1);
	\draw[] (-4.5, 1) --(-6.3, -1);
	\draw[] (-4.5, 1) --(-6.1, -1);
	\draw[] (-4.5, 1) --(-5.9, -1);
	
	\node[circle,fill=black,inner sep=0pt,minimum size=5pt] at (-3.55, -1) {};
	\node[circle,fill=black,inner sep=0pt,minimum size=5pt] at (-3.35, -1) {};
	\node[circle,fill=black,inner sep=0pt,minimum size=5pt] at (-3.15, -1) {};
	\node[circle,fill=black,inner sep=0pt,minimum size=5pt] at (-2.95, -1) {};
	\draw[] (-4.5, 1) --(-3.55, -1);
	\draw[] (-4.5, 1) --(-3.35, -1);
	\draw[] (-4.5, 1) --(-3.15, -1);
	\draw[] (-4.5, 1) --(-2.95, -1);
	
	\draw[] (-2, 1) --(-3.55, -1);
	\draw[] (-2, 1) --(-3.35, -1);
	\draw[] (-2, 1) --(-3.15, -1);
	\draw[] (-2, 1) --(-2.95, -1);
	
	\node[circle,fill=black,inner sep=0pt,minimum size=5pt] at (-0.2, -1) {};
	\node[circle,fill=black,inner sep=0pt,minimum size=5pt] at (-0.4, -1) {};
	\node[circle,fill=black,inner sep=0pt,minimum size=5pt] at (-0.6, -1) {};
	\node[circle,fill=black,inner sep=0pt,minimum size=5pt] at (-0.8, -1) {};
	
	\draw[] (-2, 1) --(-0.2, -1);
	\draw[] (-2, 1) --(-0.4, -1);
	\draw[] (-2, 1) --(-0.6, -1);
	\draw[] (-2, 1) --(-0.8, -1);
	
    \draw (-5.6, -0.25) ellipse (0.5cm and 0.25cm);
    \node[color=black, below] at (-6.5, -1.1) {$N_v$};
    \draw [-stealth] (-5.6, -0.5) to [out=-50,in=-130] (2.5, -0.75);
	\node[color=black, above right] at (-0.3, -1) {$N_u$};
	\node[color=black, below] at (-3.25, -1.1) {$\mathsf{Both}$};
	\node[color=black, below] at (-3.25, -2.25) {$\text{(a)}$};
	
	\hfill
	\node[circle,fill=black,inner sep=0pt,minimum size=5pt] at (5.5, 1) {};
	\node[color=black, above] at (5.5, 1.02) {$u$};
	\node[circle,fill=black,inner sep=0pt,minimum size=5pt] at (3, 1) {};
	\node[color=black, above] at (3, 1.02) {$v$};
	\node[circle,fill=black,inner sep=0pt,minimum size=5pt] at (1.8, -1) {};
	\node[circle,fill=black,inner sep=0pt,minimum size=5pt] at (1.6, -1) {};
	\node[circle,fill=black,inner sep=0pt,minimum size=5pt] at (1.4, -1) {};
	\node[circle,fill=black,inner sep=0pt,minimum size=5pt] at (1.2, -1) {};
	\node[circle,fill=black,inner sep=0pt,minimum size=5pt] at (7.1, -1) {};
	\node[circle,fill=black,inner sep=0pt,minimum size=5pt] at (6.9, -1) {};
	\node[circle,fill=black,inner sep=0pt,minimum size=5pt] at (6.7, -1) {};
	\node[circle,fill=black,inner sep=0pt,minimum size=5pt] at (6.5, -1) {};
	\draw[] (5.5, 1) -- (1.4, -1);
	\draw[] (5.5, 1) --(1.2, -1);
	\draw[] (5.5, 1) --(7.1, -1);
	\draw[] (5.5, 1) --(6.9, -1);
	\draw[] (5.5, 1) -- (1.6, -1);
	\draw[] (5.5, 1) -- (1.8, -1);
	\draw[] (5.5, 1) -- (6.7, -1);
	\draw[] (5.5, 1) -- (6.5, -1);
	
	\node[circle,fill=black,inner sep=0pt,minimum size=5pt] at (4.55, -1) {};
	\node[circle,fill=black,inner sep=0pt,minimum size=5pt] at (4.35, -1) {};
	\node[circle,fill=black,inner sep=0pt,minimum size=5pt] at (4.15, -1) {};
	\node[circle,fill=black,inner sep=0pt,minimum size=5pt] at (3.95, -1) {};
	\draw[] (5.5, 1) --(4.55, -1);
	\draw[] (5.5, 1) --(4.35, -1);
	\draw[] (5.5, 1) --(4.15, -1);
	\draw[] (5.5, 1) --(3.95, -1);
	
	\draw[] (3, 1) --(4.55, -1);
	\draw[] (3, 1) --(4.35, -1);
	\draw[] (3, 1) --(4.15, -1);
	\draw[] (3, 1) --(3.95, -1);

	\node[color=black, below] at (1.2, -1.1) {$N_v$};
	\node[color=black, below] at (7.25, -1.1) {$N_u$};
	\node[color=black, below] at (4.25, -1.1) {$\mathsf{Both}$};
	\node[color=black, below] at (4.25, -2.25) {$\text{(b)}$};
	
\end{tikzpicture}
	\caption{Illustration of the $\mathsf{Donate}$ process (Algorithm~\ref{alg:donation}) using notation from the proof of Lemma~\ref{lem: donation}. Panel (a) shows the neighborhoods of $u$ (denoted as $N_u \cup \mathsf{Both}$) and $v$ (denoted as $N_v \cup \mathsf{Both}$) before $\mathsf{Donate}(v \rightarrow u)$.  Panel (b) shows the result: Each neighbor of $v$ not originally connected to $u$ (i.e.\ in $N_v$) is disconnected from $v$ and connected to $u$ instead.}
	\label{fig:swap-mask}
\end{figure}
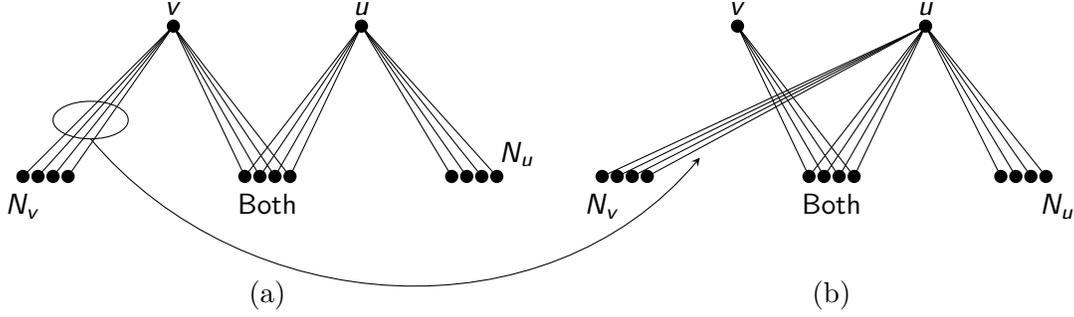

\begin{lemma}[Donation cannot hurt low-degree algorithms]\label{lem: donation}\ \\
	Let $M$ be a mask with vertices $v,u \in V(M)$. Then, informally, donation from $v$ to $u$ (Algorithm~\ref{alg:donation}) cannot decrease the low-degree likelihood ratio upper bound. Formally,
	\[{\sf LDUB}(n,M) \leq  {\sf LDUB}(n,{\sf Donate}(M,v \rightarrow u)).\]
\end{lemma}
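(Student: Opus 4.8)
The plan is to expand both sides of the claimed inequality using Definition~\ref{def: ldub}, and compare them term by term in the degree parameter $d$. Recall that
\[
{\sf LDUB}(n,M) = 1 + \sum_{d=1}^{D} \frac{1}{d!}\, \EE\Bigl[\Bigl(\sum_{(i,j)\in M} Z_i Z_j\Bigr)^d\Bigr],
\]
where $Z_i = X_i X_i'$ for independent $X, X' \sim \Clique(n,k)$, so each $Z_i \in \{0,1\}$. It therefore suffices to show, for each fixed $d \ge 1$,
\[
\EE\Bigl[\Bigl(\sum_{(i,j)\in M} Z_i Z_j\Bigr)^d\Bigr] \le \EE\Bigl[\Bigl(\sum_{(i,j)\in M'} Z_i Z_j\Bigr)^d\Bigr], \qquad M' := {\sf Donate}(M, v\to u).
\]
The crucial observation is that the random variable $Q_M := \sum_{(i,j)\in M} Z_i Z_j$ can be rewritten as a quadratic form in the variables $Z_i$: since $Z_i^2 = Z_i$, we have $Q_M = \frac{1}{2}\bigl(\bigl(\sum_i Z_i\bigr)^2 - \sum_i Z_i\bigr)$ restricted appropriately — more precisely $Q_M$ counts ordered... actually the cleanest route is to expand $Q_M^d$ and track which $Z_i$-monomials appear with which coefficients, then use symmetry of the distribution of $(Z_1,\dots,Z_n)$ under coordinate permutations.

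Here is the key structural point. By Fact~\ref{fact: conservation-donation}, $M$ and $M'$ have the same number of edges, and $M'$ differs from $M$ only in that, for the set $N_v$ of $s$ with $(s,v)\in M$, $(s,u)\notin M$, we delete $(s,v)$ and add $(s,u)$. Writing $A := \{s : (s,v)\in M, (s,u)\notin M\} = N_v$, $B := \{s : (s,u)\in M, (s,v)\notin M\} = N_u$, and noting the edges inside ${\sf Both} = \{s : (s,v),(s,u)\in M\}$ and the edge $(u,v)$ itself (if present) are untouched, we get
\[
Q_{M'} - Q_M = \sum_{s\in A} Z_s Z_u - \sum_{s\in A} Z_s Z_v = (Z_u - Z_v)\sum_{s\in A} Z_s.
\]
So I want $\EE[(Q_M + (Z_u - Z_v)W)^d] \ge \EE[Q_M^d]$ where $W = \sum_{s\in A} Z_s \ge 0$ — wait, that is the wrong direction; let me instead write $Q_M = Q_{M'} - (Z_u - Z_v)W$ and aim to show $\EE[Q_{M'}^d] \ge \EE[(Q_{M'} - (Z_u-Z_v)W)^d]$. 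The natural tool is a symmetrization/swapping argument: the joint law of $(Z_i)_{i\in[n]}$ is invariant under the transposition of coordinates $u$ and $v$ (both $X$ and $X'$ are exchangeable), and this transposition fixes $W$ (since $u,v\notin A$), fixes all edge terms $Z_i Z_j$ with $i,j\notin\{u,v\}$, and swaps the roles of $Z_u$ and $Z_v$. Crucially $M'$ is "balanced" in $u,v$ in the sense that after donation $v$ has no private neighbors in $A$, so swapping $u\leftrightarrow v$ sends $Q_{M'}$ to $Q_{M'} - (Z_u - Z_v)W + (Z_v - Z_u)(\text{something})$... I need to check that the $u,v$-swap acts on $Q_{M'}$ and $Q_M$ in a controlled way. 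Concretely, under the swap $\sigma=(u\,v)$: $Q_M \mapsto Q_M - (Z_u - Z_v)W$ (swapping sends $M$'s neighbor structure at $v$ to neighbor structure at $u$), i.e. $\sigma$ maps the edge-count of $M$ to the edge-count of the mask ${\sf Donate}$ applied the "other way", and $Q_{M'}$ is exactly the $\sigma$-symmetric average point. Then by averaging $\EE[Q_M^d] = \EE[(\sigma Q_M)^d] = \EE[(Q_M - (Z_u - Z_v)W)^d]$, and writing $Q_{M'} = Q_M + (Z_u - Z_v)W$ this becomes $\EE[(Q_{M'} - 2(Z_u-Z_v)W)^d]$... the bookkeeping needs care, but the upshot is that $Q_{M'}$ and $\sigma(Q_{M'})$ agree, while $Q_M$ and $\sigma(Q_M)$ are reflections of $Q_{M'}$ in opposite directions, so $Q_{M'} = \frac12(Q_M + \sigma Q_M)$. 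Then Jensen / convexity of $t\mapsto t^d$ would give $\EE[Q_{M'}^d] = \EE[(\tfrac12 Q_M + \tfrac12 \sigma Q_M)^d] \le \tfrac12\EE[Q_M^d] + \tfrac12 \EE[(\sigma Q_M)^d] = \EE[Q_M^d]$ — but that is again the wrong direction, so in fact I expect the correct statement uses that $t\mapsto t^d$ applied after taking expectation over the randomness with $Q_M, \sigma Q_M$ having a specific sign structure, or that one must instead compare $M$ directly to $M'$ via a convexity inequality in the "amount donated". The honest version: parametrize a family of masks $M_t$ interpolating by donating a $t$-fraction of $A$, show $t\mapsto \EE[Q_{M_t}^d]$ is convex (as a polynomial in $t$ with nonnegative coefficients after the swap-symmetrization), and that $M_0 = M$, $M_1 = M'$ sit at the two ends with $\EE[Q_{M_0}^d]=\EE[Q_{M_1}^d]$ being impossible unless...

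The hard part, and the step I expect to be the main obstacle, is exactly this sign/direction bookkeeping: making precise why donation \emph{increases} rather than decreases $\EE[Q_M^d]$. The clean way is almost certainly to expand $Q_M^d = \sum_{\text{multisets of edges}} \prod Z$, push the expectation inside, use that $\EE[\prod_{i\in T} Z_i] = \EE[\prod_{i\in T}Z_i]$ depends only on $|T|$ (by exchangeability) and is a \emph{decreasing}-in-$|T|$, log-supermodular-type quantity, and then observe that donation from $v$ to $u$ replaces pairs of edges $\{(s,v),(s',v)\}$ (which contribute a monomial with support including both, possibly making $|T|$ larger when $s\ne s'$) by pairs $\{(s,u),(s',u)\}$ sharing the vertex $u$ (smaller support, hence larger expectation) — i.e. concentrating edges onto fewer vertices can only shrink the supports of the monomials appearing in $Q_M^d$, and smaller supports have larger $\EE[\prod Z]$. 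So the plan's backbone is: (1) expand the $d$-th moment into a sum over $d$-tuples of mask edges, weighted by $\EE[Z_{i_1}Z_{j_1}\cdots Z_{i_d}Z_{j_d}]$ which depends only on the size of the vertex-support of the tuple; (2) establish monotonicity of $\ell \mapsto \EE\bigl[\prod_{i=1}^{\ell} Z_i\bigr]$ (this is a short hypergeometric computation: it equals $\bigl(\binom{n-\ell}{k-\ell}/\binom{n}{k}\bigr)^2$, manifestly decreasing in $\ell$); (3) build an injection from $d$-tuples of edges of $M$ to $d$-tuples of edges of $M'$ that never increases the support size, matching an edge $(s,v)\in N_v$-part with its image $(s,u)$ and leaving all other edges fixed, and check this injection is well-defined and support-nonincreasing using that $(s,u)\notin M$ originally (so no collisions were created) while possibly $u$ coincides with the second endpoint of some other edge in the tuple (reducing support). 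Summing the inequality $\EE[Z\text{-monomial of }M\text{-tuple}] \le \EE[Z\text{-monomial of image }M'\text{-tuple}]$ over all tuples, using the injection and Fact~\ref{fact: conservation-donation}(1) to match cardinalities, yields $\EE[Q_M^d]\le\EE[Q_{M'}^d]$ for every $d$, and hence ${\sf LDUB}(n,M)\le{\sf LDUB}(n,M')$ as claimed.
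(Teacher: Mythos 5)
Your final ``backbone'' plan—expand $\mathbb{E}[Q_M^d]$ as a sum over $d$-tuples of mask edges, use that $\mathbb{E}\bigl[\prod_{i\in T} Z_i\bigr] = \bigl(\binom{n-|T|}{k-|T|}/\binom{n}{k}\bigr)^2$ depends only on $|T|$ and is decreasing in $|T|$, and then transport term by term along the natural bijection $M\to M'$—has a genuine gap at step~(3). The natural edge-level bijection (send $(s,v)$ to $(s,u)$ for $s\in N_v$, fix every other edge) is \emph{not} support-nonincreasing on $d$-tuples. The problem is tuples that mix a moved $v$-edge with an unmoved $v$-edge. Concretely, take $M = \{(1,v),(2,v),(2,u)\}$, so $N_v=\{1\}$, $\mathsf{Both}=\{2\}$, and $M'={\sf Donate}(M,v\to u)=\{(1,u),(2,v),(2,u)\}$. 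The $d=2$ tuple $\bigl((1,v),(2,v)\bigr)$ has vertex support $\{1,2,v\}$ of size $3$, but its image $\bigl((1,u),(2,v)\bigr)$ has support $\{1,2,u,v\}$ of size $4$; the corresponding monomial expectation strictly \emph{decreases}. (In this small example the damage happens to be exactly compensated by the tuple $\bigl((1,v),(2,u)\bigr)\mapsto\bigl((1,u),(2,u)\bigr)$, whose support shrinks from $4$ to $3$, so $\mathbb{E}[Q_M^2]=\mathbb{E}[Q_{M'}^2]$—but your argument as stated is term-by-term and provides no such pairing.) So the summation step at the end of your plan does not go through without a substantially more delicate combinatorial accounting, and it is not at all obvious that a support-nonincreasing injection exists at all.

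The paper avoids this combinatorics entirely with a conditioning trick that you did not try: fix all $Z_i$ for $i\notin\{u,v\}$. Conditionally, $\Phi(M)$ and $\Phi(M')$ are affine in $(Z_u,Z_v)$; on the event $Z_u=Z_v$ they coincide, and on $Z_u\neq Z_v$ (which by exchangeability assigns probability $1/2$ to each of $(Z_u,Z_v)\in\{(1,0),(0,1)\}$) one reduces to the scalar inequality $(c_u+c)^d+(c_v+c)^d \le (c_u+c_v+c)^d + c^d$ for nonnegative $c,c_u,c_v$, which follows from monotonicity of $x\mapsto(c_u+x)^d-x^d$. Your earlier swap/symmetrization attempt circles near this idea (it is the same $u\leftrightarrow v$ exchangeability that makes the two conditional outcomes equiprobable), and you were right to suspect the naive ``$Q_{M'}=\tfrac12(Q_M+\sigma Q_M)$'' identity is false; the missing ingredient is to condition first, so that the comparison becomes a one-dimensional convexity fact about $d$-th powers rather than a global Jensen step in the wrong direction.

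Your expansion in step~(1) and the monotonicity claim in step~(2) are both correct and could be salvaged as ingredients in some proof, but as written the proposal does not establish the lemma.
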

\begin{proof}
	Let $M' = {\sf Donate}(M,v \rightarrow u)$. Denote $\Phi(M) := \sum\limits_{(i,j) \in M}Z_i \cdot Z_j$ and $\Phi(M') := \sum\limits_{(i,j) \in M'}Z_i \cdot Z_j$, where the $Z_i$'s are as in Definition~\ref{def: ldub}.
	We will show that for any positive integer $d$, \[\mathbb{E}\left[\left(\Phi(M)\right)^d\right] \leq \mathbb{E}\left[\left(\Phi(M')\right)^d\right],\] since the desired conclusion follows easily from this.
	
	In fact, by the law of total expectation, it suffices to simply show the following inequality for any set $\{z_i \in \{0,1\} : i \in [n] \setminus \{v,u\}\}$ in the suppport of $\{Z_i : i \in [n] \setminus \{v,u\}\}$:
	
	\begin{equation}\label{eqn:exp-conditioned}
		\mathbb{E}\left[\left(\Phi(M)\right)^d \big\vert Z_i = z_i, \forall i \in [n] \setminus \{v,u\} \right] \leq \mathbb{E}\left[\left(\Phi(M')\right)^d \big\vert Z_i = z_i, \forall i \in [n] \setminus \{v,u\} \right].
	\end{equation}
	
	Hence, for the rest of the proof, we condition on $Z_i = z_i$ for $i \in [n] \setminus \{v,u\}$ and the only randomness is in the random variables $Z_v,Z_u$. For the rest of the proof when we take expectations, even though the aforementioned conditioning exists, we will not indicate it notationally.
	\begin{enumerate}
		\item Let $N_u$ be the vertices in $V(M) \setminus \{ v,u\}$ connected (by edges in $M$) to $u$ but not to $v$.
		\item Let $N_v$ be the vertices in $V(M) \setminus \{ v,u\}$ connected (by edges in $M$) to $v$ but not to $u$.
		\item Let ${\sf Both}$ be the vertices in $V(M) \setminus \{ v,u\}$ connected (by edges in $M$) to both $u$ and $v$.
		\item Let ${\sf Other}$ be the set of edges in $M$ involving neither $u$ nor $v$.
	\end{enumerate}

	With this notation and the fact that $\One_{(u,v) \in M'} = \One_{(u,v) \in M}$ , we can rewrite \[\Phi(M) = \sum\limits_{s \in N_v}z_s \cdot Z_v + \sum\limits_{s \in N_u}z_s \cdot Z_u + \sum\limits_{s \in {\sf Both}}z_s \cdot (Z_u+Z_v)+   \sum\limits_{(i,j) \in {\sf Other}}z_i \cdot z_j + \One_{(u,v) \in M} \cdot Z_uZ_v\] and \[\Phi(M') = \sum\limits_{s \in N_u \cup N_v}z_s \cdot Z_u + \sum\limits_{s \in {\sf Both}}z_s \cdot (Z_u+Z_v) +  \sum\limits_{(i,j) \in {\sf Other}}z_i \cdot z_j + \One_{(u,v) \in M} \cdot Z_uZ_v.\]
	
	It is clear from the expressions above that if $Z_u = Z_v$, we have $\Phi(M) = \Phi(M')$, and hence
	\[\mathbb{E}\left[\left(\Phi(M)\right)^d \big\rvert Z_u = Z_v \right] = \mathbb{E}\left[\left(\Phi(M')\right)^d \big\rvert Z_u = Z_v \right].\]
	
	This fact combined with the law of total expectation means we now only need to prove
	\[\mathbb{E}\left[\left(\Phi(M)\right)^d \big\rvert Z_u \neq Z_v \right] \leq \mathbb{E}\left[\left(\Phi(M')\right)^d \big\rvert Z_u \neq Z_v \right],\] assuming the event $Z_u \neq Z_v$ has positive conditional probability (if it does not, we are already done).
	
	For the rest of the proof, assume $Z_u \neq Z_v$. We must have $Z_u+Z_v = 1$ and $Z_uZ_v=0$. As a result, there exist non-negative constants $c$, $c_u$, and $c_v$ (which depend on the values of $z_i$ for $i \in [n] \setminus \{v,u\}$) such that
	\[\Phi(M) = c_u \cdot Z_u + c_v \cdot Z_v + c\] and \[\Phi(M') = (c_u+c_v) \cdot Z_u + c.\]
	
	By symmetry, the events $Z_u = 1, Z_v=0$ and $Z_u=0,Z_v=1$ both have probability $1/2$ (as long as the events we have conditioned on so far occur with positive probability). Hence,
	\[ \mathbb{E}\left[\left(\Phi(M)\right)^d \big\rvert Z_u \neq Z_v \right] = \frac{\left(c_u+c\right)^d+\left(c_v+c\right)^d}{2} \] and 
	\[ \mathbb{E}\left[\left(\Phi(M')\right)^d \big\rvert Z_u \neq Z_v \right] = \frac{\left(c_u+c_v+c\right)^d+c^d}{2}.\]
	For positive integers $d$, the function $f(x) := \left(c_u+x\right)^d-x^d$ is non-decreasing for $x \geq 0$. This follows by using the binomial expansion and elementary calculus, along with the fact that $c_u \geq 0$.
	
	Because $0 \leq c \leq c_v + c$, this means \[\left(c_u+c\right)^d - c^d \leq \left(c_u+c_v+c\right)^d-\left(c_v+c\right)^d.\]
	Rearranging this inequality yields
	\[\mathbb{E}\left[\left(\Phi(M)\right)^d \big\rvert Z_u \neq Z_v \right] \leq \mathbb{E}\left[\left(\Phi(M')\right)^d \big\rvert Z_u \neq Z_v \right],\] which completes the proof.
\end{proof}

\begin{lemma}[Vertex Removal Lemma]\label{lem: vertex-removal}\ \\
	Let $M$ be a mask with the following properties, where $t$ is some positive integer.
	\begin{enumerate}
		\item The maximum $M$-degree of vertices in $V(M)$ is at most $2t$.
		\item The number of vertices in $V(M)$ is large, with $\lvert V(M) \rvert \geq 2t + 2 +  \frac{2\lvert M \rvert}{t}$.
	\end{enumerate}
	Then there exists a mask $M'$ with the following properties.
	\begin{enumerate}
		\item The maximum $M'$-degree of vertices in $V(M')$ is also at most $2t$.
		\item The number of edges in $M'$ is identical to the number of edges in $M$. That is, $\lvert M' \rvert = \lvert M \rvert$.
		\item ${\sf LDUB}(n,M) \leq  {\sf LDUB}(n,M')$.
		\item There are strictly fewer vertices in $M'$ compared to $M$. That is, $\lvert V(M') \rvert < \lvert V(M) \rvert$.
	\end{enumerate}
\end{lemma}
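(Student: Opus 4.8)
The plan is to obtain $M'$ as the end result of a finite sequence of donations (Algorithm~\ref{alg:donation}) applied to $M$. By Lemma~\ref{lem: donation} every donation can only increase ${\sf LDUB}(n,\cdot)$, so any such $M'$ automatically satisfies property~3; by Fact~\ref{fact: conservation-donation} every donation preserves the number of edges (property~2) and never enlarges the vertex set. Hence it suffices to find a donation sequence, starting from $M$, that terminates at a mask of maximum mask degree at most $2t$ having strictly fewer vertices than $M$, and in fact it is cleanest to keep the maximum mask degree $\le 2t$ throughout.

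The vertex we aim to delete is one of small mask degree. Since hypothesis~2 gives $\lvert V(M)\rvert \ge 2t+2+2\lvert M\rvert/t > 2\lvert M\rvert/t$, the average mask degree $2\lvert M\rvert/\lvert V(M)\rvert$ is strictly below $t$; counting the vertices of mask degree $\ge t$ then shows there are at least $2t+2$ vertices $v$ with ${\sf deg}^M(v)\le t-1$. Write $N_M[v]$ for $v$ together with its $M$-neighbors. The favorable situation is that, for such a $v$, there is a vertex $u$ with ${\sf deg}^M(u)\le 2t-{\sf deg}^M(v)$ whose closed neighborhood $N_M[u]$ is disjoint from $N_M[v]$ (equivalently, $v,u$ are non-adjacent in $M$ with no common $M$-neighbor). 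Then $M':={\sf Donate}(M,v\rightarrow u)$ does the job: every $M$-neighbor of $v$ is donated to $u$, so $v$ becomes isolated and leaves $V(M)$ by Fact~\ref{fact: conservation-donation}; the new mask degree of $u$ is exactly ${\sf deg}^M(u)+{\sf deg}^M(v)\le 2t$, every other mask degree is unchanged, and $\lvert M'\rvert=\lvert M\rvert$. In particular, if ${\sf deg}^M(v)\le 1$ this favorable situation provably occurs: writing $s$ for $v$'s unique $M$-neighbor, the only vertices unavailable as $u$ are $s$, the $M$-neighbors of $s$ (at most $2t$, one of which is $v$), and the vertices of mask degree $2t$ (at most $\lvert M\rvert/t$); their number is below $\lvert V(M)\rvert$ by hypothesis~2. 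So it would suffice to reach, via donations, a mask (with the same edge count, no more vertices, and maximum mask degree $\le 2t$) that contains a vertex of mask degree at most $1$.

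The crux --- and the step I expect to be the main obstacle --- is establishing the favorable situation for general $v$, possibly after a preparatory stretch of donations. For fixed $v$, the vertices unavailable as $u$ are those in $N_M[v]$ (at most ${\sf deg}^M(v)+1\le t$), those at $M$-distance exactly $2$ from $v$ (at most ${\sf deg}^M(v)\cdot(2t-1)$), and those of mask degree $>2t-{\sf deg}^M(v)$ (at most $2\lvert M\rvert/(t+2)$ since ${\sf deg}^M(v)\le t-1$); whenever this total is $<\lvert V(M)\rvert$ we are immediately done, and a short computation shows this holds as soon as $v$'s distance-$2$ neighborhood is not too large --- which fails only when $\lvert M\rvert$ is bounded by a polynomial in $t$ and the low-degree vertices are tightly clustered within $M$-distance $2$ of $v$. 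In that clustered, bounded-size regime one must work harder: I would first run a sequence of ``neutral'' donations that drains edges out of the dense neighborhood of $v$, redistributing them onto other low-degree vertices while keeping all mask degrees $\le 2t$; once the mask has been spread out, $v$'s distance-$2$ neighborhood has shrunk and the favorable situation applies. Verifying that this spreading can always be carried out, and checking that the precise threshold $\lvert V(M)\rvert\ge 2t+2+2\lvert M\rvert/t$ is exactly what makes the bookkeeping close, is where the real work of the lemma lies.
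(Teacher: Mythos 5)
Your opening reductions are correct and match the paper's: a donation preserves the edge count and never enlarges the vertex set (Fact~\ref{fact: conservation-donation}), can only increase ${\sf LDUB}$ (Lemma~\ref{lem: donation}), and the counting argument that hypothesis~2 forces at least $2t+2$ low-degree vertices is sound. But from there your route diverges and does not close. You try to isolate a low-degree vertex $v$ by a \emph{single} donation ${\sf Donate}(M,v\to u)$, which requires finding a recipient $u$ of low enough degree with $N_M[u]\cap N_M[v]=\emptyset$. As you yourself observe, this ``favorable situation'' can fail: the distance-$2$ neighborhood of $v$ can have size on the order of $t^2$, which can exceed the guaranteed lower bound $2t+2+2\lvert M\rvert/t$ on $\lvert V(M)\rvert$ when $\lvert M\rvert$ is bounded by a polynomial in $t$. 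The ``spreading'' phase you propose to handle that regime is never specified --- you give no rule for selecting the neutral donations, no invariant showing they preserve the degree cap $\le 2t$, and no termination argument --- and you flag this as the unresolved ``real work of the lemma.'' That is a genuine gap, not a detail.

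The paper's proof sidesteps the disjoint-neighborhood requirement entirely by using \emph{many} recipients for a single donor rather than one recipient per donor. Let ${\sf Low}$ be the set of vertices of $M$-degree at most $t$; the counting argument gives $\lvert{\sf Low}\rvert\ge 2t+2$. Fix one $v_1\in{\sf Low}$ and run donations $v_1\to v_2$, $v_1\to v_3$, \ldots\ through all of ${\sf Low}$, skipping any iteration in which $v_1$ has already left the mask. Each recipient starts with degree $\le t$ and can gain at most $t$ edges (all coming from $v_1$, whose degree is $\le t$), so the cap $2t$ is maintained automatically throughout --- no disjointness condition is needed. The key point your plan misses is that no single donation must fully empty $v_1$'s neighborhood: if after the whole sweep $v_1$ still had a neighbor $u$, then for every $i$ the edge $(v_1,u)$ survived the donation $v_1\to v_i$, which forces $(v_i,u)$ to have been present; hence $u$ is adjacent to all of ${\sf Low}\setminus\{u\}$ in the final mask, giving ${\sf deg}^{M'}(u)\ge\lvert{\sf Low}\rvert-1\ge 2t+1$, a contradiction. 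So $v_1$ is removed, and the hypothesis $\lvert V(M)\rvert\ge 2t+2+2\lvert M\rvert/t$ is used only to ensure $\lvert{\sf Low}\rvert\ge 2t+2$, with no further bookkeeping.
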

\begin{proof}
	Let ${\sf Low}$ be the subset of vertices in $V(M)$ with $M$-degree at most $t$. That is, ${\sf Low} := \left\{v \in V(M): {\sf deg}^M(v) \leq t \right\}$. Consider the set $V(M) \setminus {\sf Low}$. Since every vertex in this set has $M$-degree greater than $t$, there must be at least $\frac{\left(\lvert V(M) \rvert - \lvert {\sf Low} \rvert \right) \cdot t}{2}$ edges in $M$. Consequently, \[\frac{\left(2t + 2+  \frac{2\lvert M \rvert}{t} - \lvert {\sf Low} \rvert \right) \cdot t}{2} \leq \frac{\left(\lvert V(M) \rvert - \lvert {\sf Low} \rvert \right) \cdot t}{2} \leq \lvert M \rvert .\]
	Rearranging this inequality yields the useful conclusion $\lvert {\sf Low} \rvert \geq 2t+2$.
	
	Arbitrarily order the vertices in ${\sf Low}$, naming them $\{v_1,v_2,...,v_{\lvert {\sf Low} \rvert}\}$ and run the following algorithm to obtain the mask $M'$. Since $\lvert {\sf Low} \rvert \geq 2$, the algorithm is not vacuous.
	
	\textbf{initialize} $M_2 = M$
	
	\For{ $i \in \{2,...,\lvert {\sf Low} \rvert\}$}{\If{ $v_1 \in V(M_i)$}{
			$M_{i+1} = {\sf Donate}(M_i,v_1 \rightarrow v_i)$ (Algorithm~\ref{alg:donation}) }	
		{\Else{$M_{i+1} = M_i$}}}
	\textbf{output} $M' = M_{\lvert {\sf Low} \rvert+1}$
	
	For the above process to be well defined, we need every call to Algorithm~\ref{alg:donation} to be well defined. Since we always check whether $v_1 \in V(M_i)$ before invoking Algorithm~\ref{alg:donation}, and $\{v_2,...,v_{\lvert {\sf Low} \rvert}\}$ must be in $V(M_i)$ at every iteration by Fact~\ref{fact: conservation-donation}, every such call is well defined.
	
	By repeated applications of Fact~\ref{fact: conservation-donation}, it is clear that $V(M')$ must be either $V(M)$ or $V(M) \setminus v_1$.
	
	\begin{enumerate}
		\item By construction, only the vertices in ${\sf Low} \setminus v_1$ can have greater $M'$-degree than $M$-degree. However, since $v_1$, which is in ${\sf Low}$, is the only vertex that donates edges in our construction, the degree of any other vertex can increase by at most $t$. Since every vertex in ${\sf Low}$ has $M$-degree at most $t$, they can have $M'$-degree at most $2t$. Hence, the maximum $M'$-degree of any vertex in $V(M')$ is at most $2t$.
		\item $\lvert M' \rvert = \lvert M \rvert$ by repeated applications of Fact~\ref{fact: conservation-donation}.
		\item ${\sf LDUB}(n,M) \leq  {\sf LDUB}(n,M')$ by repeated applications of Lemma~\ref{lem: donation}.
		\item To show $\lvert V(M') \rvert < \lvert V(M) \rvert$, we just need to show that $v_1 \notin V(M')$.
		
		Suppose this is false, and $v_1 \in V(M')$. Then there exists a $u \in V(M') \setminus v_1$ such that $(v_1,u) \in M'$. Further, we must have taken the ``if'' branch in every iteration of our construction of $M'$, and $v_1$ must have donated edges (as specified by Algorithm~\ref{alg:donation}) to every vertex in $\{2,...,\lvert {\sf Low} \rvert\}$. Then the only way for $(v_1,u)$ to exist in $M'$ is if $(v_i,u)$ exists in $M'$ for all $v_i \in {\sf Low} \setminus u$. This means $u$ must have $M'$-degree at least $\lvert {\sf Low} \rvert -1 \geq 2t+1$. This contradicts the fact that every vertex in $V(M')$ has $M'$-degree at most $2t$. Our assumption must be false, and we must have $v_1 \notin V(M')$.
	\end{enumerate}
	This completes the proof of Lemma~\ref{lem: vertex-removal}.
\end{proof}

\begin{lemma}[Converting masks with low maximum degree to masks with few vertices]\label{lem: remove-most-vertices}\ \\
	Let $M$ be a mask where the maximum $M$-degree of any vertex in $V(M)$ is at most $2t$ for some positive integer $t$. Then there exists a mask $M'$ with the following properties.
	\begin{enumerate}
		\item ${\sf LDUB}(n,M) \leq  {\sf LDUB}(n,M')$.
		\item $M'$ has very few vertices. That is, $\vert V(M') \rvert \leq 2t + 2 + \frac{2\lvert M \rvert}{t}$.
	\end{enumerate}
	
\end{lemma}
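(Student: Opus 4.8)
The plan is to iterate the Vertex Removal Lemma (Lemma~\ref{lem: vertex-removal}) until it can no longer be applied. Concretely, I would set $M_0 := M$ and build a sequence of masks: given $M_i$, if $\lvert V(M_i) \rvert \geq 2t + 2 + \frac{2\lvert M_i \rvert}{t}$, apply Lemma~\ref{lem: vertex-removal} to $M_i$ to obtain $M_{i+1}$; otherwise halt and output $M' := M_i$. (If already $\lvert V(M) \rvert < 2t + 2 + \frac{2\lvert M \rvert}{t}$, simply take $M' = M$, for which both conclusions are immediate.)

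The three invariants supplied by Lemma~\ref{lem: vertex-removal} are exactly what is needed to run this. First, each $M_{i+1}$ has maximum mask degree at most $2t$, so the hypothesis of Lemma~\ref{lem: vertex-removal} is satisfied at every step and the construction is well defined. Second, $\lvert M_{i+1} \rvert = \lvert M_i \rvert$, hence $\lvert M_i \rvert = \lvert M \rvert$ for all $i$, so the threshold $2t + 2 + \frac{2\lvert M_i \rvert}{t}$ appearing in the halting test is the fixed quantity $2t + 2 + \frac{2\lvert M \rvert}{t}$ throughout. Third, ${\sf LDUB}(n, M_i) \leq {\sf LDUB}(n, M_{i+1})$ at each step, so chaining these inequalities gives ${\sf LDUB}(n, M) \leq {\sf LDUB}(n, M')$, which is conclusion~1.

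Termination follows from property~4 of Lemma~\ref{lem: vertex-removal}: $\lvert V(M_{i+1}) \rvert < \lvert V(M_i) \rvert$, so $\bigl(\lvert V(M_i) \rvert\bigr)_i$ is a strictly decreasing sequence of nonnegative integers and the process stops after finitely many steps. When it stops, the halting condition reads $\lvert V(M') \rvert < 2t + 2 + \frac{2\lvert M \rvert}{t}$, which in particular gives conclusion~2, $\lvert V(M') \rvert \leq 2t + 2 + \frac{2\lvert M \rvert}{t}$.

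There is essentially no obstacle here beyond this bookkeeping: the real work was done in Lemma~\ref{lem: vertex-removal}, and the only point worth stressing is that the two quantities governing its applicability—the maximum mask degree and the edge count—are precisely the quantities it leaves (almost) unchanged, so the iteration never gets stuck, and the vertex bound we are after is exactly the failure condition of the lemma's applicability test.
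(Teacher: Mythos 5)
Your proposal is correct and takes essentially the same approach as the paper, which simply states that the claim follows by iterating Lemma~\ref{lem: vertex-removal} until the vertex-count threshold is reached. You have merely spelled out the bookkeeping—checking that the hypotheses persist, that edge count is preserved so the threshold is fixed, that the ${\sf LDUB}$ inequalities chain, and that the process terminates—which the paper leaves implicit.
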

\begin{proof}
	This follows in a straightforward manner from Lemma~\ref{lem: vertex-removal} which lets us remove vertices until the desired condition on the size of the vertex set is met.
\end{proof}

\begin{lemma}[Masks without enough vertices have small low-degree likelihood ratio]\label{lem: calculation-small-vertex}\ \\
	For any mask $M$,
	\[{\sf LDUB}(n,M) \leq 1 + \sum\limits_{d = 1}^{D} \frac{1}{d!}\cdot \left(\frac{2d \cdot \log e}{\log \left(\frac{2d \cdot n^2}{\lvert V(M) \rvert \cdot k^2}+1\right)}\right)^{2d}.\]
\end{lemma}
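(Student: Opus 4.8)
The plan is to control the quantity $\mathbb{E}\bigl[\bigl(\sum_{(i,j)\in M}Z_iZ_j\bigr)^{d}\bigr]$ appearing in ${\sf LDUB}(n,M)$ by reducing it to a high moment of a single integer-valued random variable and then applying a Chernoff-type bound. Write $\Phi:=\sum_{(i,j)\in M}Z_iZ_j$ and let $W:=\bigl\lvert\{v\in V(M):Z_v=1\}\bigr\rvert$ be the number of ``doubly active'' mask vertices. Since each $Z_v\in\{0,1\}$, the term $Z_iZ_j$ is the indicator that both endpoints of the mask edge $(i,j)$ are active, so $\Phi$ counts the mask edges lying inside the active set; as every mask edge has both endpoints in $V(M)$, there are at most $\binom{W}{2}$ such edges. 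Hence $\Phi\le\binom{W}{2}\le W^{2}/2$ pointwise, so $\mathbb{E}[\Phi^{d}]\le 2^{-d}\,\mathbb{E}[W^{2d}]$. It then suffices to bound $\mathbb{E}[W^{2d}]$ in terms of $\mu:=\lvert V(M)\rvert k^{2}/n^{2}$, noting that $2d/\mu=2dn^{2}/(\lvert V(M)\rvert k^{2})$.

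The next step is to estimate the falling-factorial moments $\mathbb{E}[W^{\underline{t}}]$, where $W^{\underline{t}}:=W(W-1)\cdots(W-t+1)$. This is the expected number of ordered $t$-tuples of distinct active vertices of $V(M)$, so it equals $\lvert V(M)\rvert^{\underline{t}}$ times the probability that a fixed $t$-set is contained in both $X$ and $X'$; since $X,X'$ are independent uniform $k$-subsets of $[n]$, that probability is $\bigl(k^{\underline{t}}/n^{\underline{t}}\bigr)^{2}\le (k/n)^{2t}$, so $\mathbb{E}[W^{\underline{t}}]\le\lvert V(M)\rvert^{t}(k/n)^{2t}=\mu^{t}$. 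Expanding $(1+\theta)^{W}=\sum_{j\ge0}\binom{W}{j}\theta^{j}$ and taking expectations then gives $\mathbb{E}\bigl[(1+\theta)^{W}\bigr]\le\sum_{j\ge0}(\theta\mu)^{j}/j!=e^{\theta\mu}$ for every $\theta\ge0$; substituting $\theta=e^{\lambda}-1$ yields $\mathbb{E}[e^{\lambda W}]\le e^{(e^{\lambda}-1)\mu}$ for every $\lambda\ge0$, i.e.\ $W$ satisfies the same moment generating function bound as a $\mathrm{Poisson}(\mu)$ random variable.

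A routine optimization finishes the argument. For any $\lambda>0$ the elementary inequality $x^{2d}e^{-\lambda x}\le(2d/(e\lambda))^{2d}$ for $x\ge0$ (the maximum occurring at $x=2d/\lambda$) gives $\mathbb{E}[W^{2d}]\le(2d/(e\lambda))^{2d}\,\mathbb{E}[e^{\lambda W}]\le(2d/(e\lambda))^{2d}e^{(e^{\lambda}-1)\mu}$. The choice $\lambda=\ln(1+2d/\mu)$ is exactly the one making $(e^{\lambda}-1)\mu=2d$, so the factor $e^{2d}$ cancels the $e^{-2d}$ inside $(2d/(e\lambda))^{2d}$ and we get $\mathbb{E}[W^{2d}]\le\bigl(2d/\ln(1+2d/\mu)\bigr)^{2d}$. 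Combining with the first step, $\mathbb{E}[\Phi^{d}]\le 2^{-d}\bigl(2d/\ln(1+2d/\mu)\bigr)^{2d}\le\bigl(2d/\ln(1+2d/\mu)\bigr)^{2d}$; summing over $d=1,\dots,D$ against the weights $1/d!$ in ${\sf LDUB}(n,M)=1+\sum_{d=1}^{D}\frac1{d!}\mathbb{E}[\Phi^{d}]$, and using $\ln(\cdot)=\log(\cdot)/\log e$ together with $2d/\mu=2dn^{2}/(\lvert V(M)\rvert k^{2})$, produces exactly the stated bound.

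I do not expect a serious obstacle: once $\Phi\le\binom{W}{2}$ is observed, everything reduces to a Chernoff bound for $W$, and the only place requiring care is the bookkeeping of logarithm bases and the particular value $\lambda=\ln(1+2d/\mu)$ that forces the exponentials to cancel so the constants match the statement precisely. (The degenerate case $M=\emptyset$, where $\mu=0$ and both sides equal $1$, should be noted separately.) One could alternatively expand $\Phi^{d}$ as a sum over $d$-tuples of mask edges, bound the contribution of a tuple spanning $s$ vertices by $(k/n)^{2s}$, and count such tuples by $\binom{\lvert V(M)\rvert}{s}\binom{s}{2}^{d}$; this also works but loses an extra factor $e^{\mu}$ relative to the moment-generating-function route and is less clean.
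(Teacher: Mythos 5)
Your proof is correct, and it reaches the paper's bound with an extra factor of $2^{-d}$ to spare (which you correctly discard). Both you and the paper begin the same way: observe that $\Phi=\sum_{(i,j)\in M}Z_iZ_j$ is pointwise at most a function of $W:=\sum_{i\in V(M)}Z_i$ (you use $\binom{W}{2}$, the paper uses $W^2$), so the task reduces to bounding $\EE[W^{2d}]$. From there the routes genuinely diverge. The paper shows that the $Z_i$ are \emph{negatively associated} (via the permutation-distribution and closure properties of NA families), couples $W$ to an independent-sum Binomial $H^*$ with $\EE[W^{2d}]\le\EE[(H^*)^{2d}]$ by Shao's convex-ordering theorem, and then quotes a known Binomial moment bound (Ahle) which delivers the stated expression. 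You instead compute the falling-factorial moments $\EE[W^{\underline t}]\le\mu^t$ directly from the hypergeometric structure (no NA machinery needed — the negative correlations simply make the exact computation an upper bound), derive the Poisson-type MGF bound $\EE[e^{\lambda W}]\le e^{(e^\lambda-1)\mu}$, and carry out the Chernoff optimization at $\lambda=\ln(1+2d/\mu)$ by hand, which is essentially re-deriving Ahle's inequality. Your version is more elementary and self-contained (no appeal to negative association, stochastic-ordering, or external moment bounds), at the cost of a few extra lines of algebra; the paper's version is shorter given the toolkit it invokes. Your sanity check on the degenerate case $V(M)=\emptyset$ is a nice touch that the paper leaves implicit.
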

\begin{proof}
	For any mask $M$, it is an easy observation that adding edges to $M$ cannot decrease ${\sf LDUB}(n,M)$. Hence we have
	\[{\sf LDUB}(n,M) \leq 1 + \sum\limits_{d = 1}^{D} \frac{1}{d!}\cdot \mathbb{E}\left[\left(\sum\limits_{(i,j) \in {V(M) \choose 2}}Z_i \cdot Z_j\right)^d\right]\] where $Z_i,Z_j$ are $\{0,1\}$-valued random variables as in Definition~\ref{def: ldub}.
	
	Let $H := \sum\limits_{i \in V(M)} Z_i$. Then we have $\sum\limits_{(i,j) \in {V(M) \choose 2}}Z_i \cdot Z_j \leq \sum\limits_{i,j \in V(M)}Z_i \cdot Z_j = H^2$, and this gives
	\[{\sf LDUB}(n,M) \leq 1 + \sum\limits_{d = 1}^{D} \frac{1}{d!}\cdot \mathbb{E}\left[H^{2d}\right].\]
	Unfortunately, $H$ is the sum of dependent random variables. Since it is often easier to analyze the moments of sums of independent random variables, we use the following approach.
	
	\begin{enumerate}
		\item The $\{Z_i : i \in V(M) \}$ form a set of negatively associated (henceforth NA) random variables. This can be proved as follows.
		
		\begin{enumerate}
			\item The $\{X_i : i \in [n]\}$ (as in Definition~\ref{def: ldub}) are NA because they can be viewed as a permutation distribution~\cite[Definition 2.10 and Theorem 2.11]{joagdev}. Similarly, the $\{X'_i : i \in [n]\}$ are NA and independent of the $\{X_i : i \in [n]\}$. 
			\item $\{X_i : i \in [n]\} \cup \{X'_i : i \in [n]\}$ are jointly NA because they are the union of independent sets of NA random variables~\cite[Property P7]{joagdev}.
			\item $\{Z_i = X_i \cdot X'_i : i \in [n]\}$ are NA because they are non-decreasing functions of disjoint subsets of NA random variables~\cite[Propety P6]{joagdev}.
			\item $\{Z_i : i \in V(M)\}$ are NA because they are a subset of NA random variables~\cite[Property P4]{joagdev}.
		\end{enumerate}

		\item Negatively associated random variables can be coupled to independent random variables.
		
		Let $\{Z^*_i : i \in V(M)\}$ be a collection of independent random variables where each $Z^*_i$ has the same marginal distribution as $Z_i$. Let $H^* := \sum\limits_{i \in V(M)} Z^*_i$. For positive integers $d$, the function $f(x) = x^{2d}$ is convex when $x \geq 0$. Hence we can use \cite[Theorem 1]{shao} to conclude \[\mathbb{E}\left[H^{2d}\right] \leq \mathbb{E}\left[{(H^*)}^{2d}\right].\]
		\item Each $Z_i$ (and hence $Z^*_i$) is $1$ with probability $k^2/n^2$ and $0$ otherwise. This means $H^*$ is a Binomial random variable with $\lvert V(M) \rvert$ trials each having success probability $k^2/n^2$. We can now use known bounds on the moments of Binomial random variables (e.g.,~\cite[Corollary 1]{ahle}) and obtain \[\mathbb{E}\left[{H^*}^{2d}\right] \leq \left(\frac{2d \cdot \log e}{\log \left(\frac{2d \cdot n^2}{\lvert V(M) \rvert \cdot k^2}+1\right)}\right)^{2d}.\]
	\end{enumerate}
	Putting this all together completes the proof.
\end{proof}

\begin{lemma}[Masks with small maximum degree have small low-degree likelihood ratio]\label{lemma:nicemasks}\ \\
	Let $n = \omega(1)$ and $k \leq n$ be sequences of positive integers and $M$ be a sequence of masks on the ground set $[n]$. Suppose that the maximum $M$-degree of any vertex in $V(M)$ is small. That is, there exists a sequence of positive integers $t$ with the following properties.
	\begin{enumerate}
		\item $\max\limits_{v \in V(M)} {\sf deg}^{M}(v)  \leq 2t$.
		\item $\left(2t + 2 + \frac{2\lvert M \rvert}{t}\right) \cdot  \left(\frac{k^2}{n^2}\right) =  O(n^{-\epsilon})$ for some constant $\epsilon > 0$.
	\end{enumerate}
	Then for any sequence of degrees $D = o\left(\left(\log n\right)^2\right)$, we have the low-degree likelihood ratio upper bound
	\[{\sf LDUB}(n,M) = 1+o(1).\]
\end{lemma}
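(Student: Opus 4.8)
The plan is to chain the two preceding lemmas and then perform a short direct estimate. First I would apply Lemma~\ref{lem: remove-most-vertices} to the sequence of masks $M$ with the given sequence $t$: hypothesis~1, namely $\max_{v\in V(M)}{\sf deg}^M(v)\le 2t$, is exactly its input requirement, so it produces a mask $M'$ with ${\sf LDUB}(n,M)\le{\sf LDUB}(n,M')$ and $|V(M')|\le 2t+2+\frac{2|M|}{t}$. Feeding this $M'$ into Lemma~\ref{lem: calculation-small-vertex} then gives
\[
{\sf LDUB}(n,M)\;\le\;{\sf LDUB}(n,M')\;\le\; 1+\sum_{d=1}^{D}\frac{1}{d!}\left(\frac{2d\log e}{\log\left(\frac{2d\,n^2}{|V(M')|\,k^2}+1\right)}\right)^{2d}.
\]
Since each $Z_iZ_j\in\{0,1\}$, the sum is nonnegative and ${\sf LDUB}(n,M)\ge 1$, so it suffices to show the sum is $o(1)$ (if $M=\emptyset$ the claim is immediate, so assume $V(M')\neq\emptyset$).

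The next step is to lower bound the denominator using hypothesis~2. Since $|V(M')|\le 2t+2+\frac{2|M|}{t}$, we get $|V(M')|\,\frac{k^2}{n^2}=O(n^{-\epsilon})$, hence $\frac{n^2}{|V(M')|\,k^2}\ge c\,n^{\epsilon}$ for a constant $c>0$ and all large $n$. As $2d\ge 2$ for every $d\ge 1$, this yields $\log\left(\frac{2d\,n^2}{|V(M')|\,k^2}+1\right)\ge \log(c\,n^{\epsilon})\ge \frac{\epsilon}{2}\log n$ for all large $n$, \emph{uniformly} in $d$. Substituting this bound and using $d!\ge (d/e)^d$, the $d$-th summand is at most
\[
\left(\frac{e}{d}\right)^{d}\left(\frac{4d\log e}{\epsilon\log n}\right)^{2d}=\left(\frac{A\,d}{(\log n)^2}\right)^{d},\qquad A:=\frac{16\,e\,(\log e)^2}{\epsilon^2},
\]
with $A$ a constant depending only on $\epsilon$.

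Finally I would invoke the hypothesis $D=o\bigl((\log n)^2\bigr)$. Setting $\rho:=\frac{A D}{(\log n)^2}=o(1)$, for every $1\le d\le D$ we have $\frac{A d}{(\log n)^2}\le\rho\le 1$ once $n$ is large, so the sum is bounded by $\sum_{d=1}^{D}\rho^{d}\le\frac{\rho}{1-\rho}=o(1)$. Together with ${\sf LDUB}(n,M)\ge 1$ this gives ${\sf LDUB}(n,M)=1+o(1)$, as desired.

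There is not much of a genuine obstacle: Lemmas~\ref{lem: remove-most-vertices} and~\ref{lem: calculation-small-vertex} do the real work, and what remains is a moment-sum estimate. The only point needing care is that the two applications of ``for $n$ large enough'' must hold uniformly over the entire range $1\le d\le D$. The denominator bound $\log(\cdots)\ge\frac{\epsilon}{2}\log n$ is uniform because $d$ enters only as a multiplier of $n^2/(|V(M')|k^2)$ inside the logarithm, which only helps; and it is precisely the assumption $D=o((\log n)^2)$ that keeps the base $A d/(\log n)^2$ of the $d$-th term below $1$ throughout that range, so the series collapses like a vanishing geometric series rather than diverging.
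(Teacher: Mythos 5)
Your proof is correct and follows essentially the same route as the paper: both chain Lemma~\ref{lem: remove-most-vertices} into Lemma~\ref{lem: calculation-small-vertex}, lower-bound the logarithm in the denominator by a constant times $\log n$ using hypothesis~2, bound $1/d!$ by $(e/d)^d$, and recognize the resulting sum as a geometric series with ratio $\Theta(D/(\log n)^2)=o(1)$. The only cosmetic difference is that you collapse the $d$-th summand directly to $(Ad/(\log n)^2)^d$ and compare with $\rho^d$, whereas the paper keeps $(d!)^{1/2d}$ visible and bounds $\sqrt{d}\le\sqrt{D}$ before summing; the argument is the same.
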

\begin{proof}
	Our first hypothesis that $\max\limits_{v \in V(M)} {\sf deg}^{M}(v)  \leq 2t$ immediately lets us combine the following:
	\begin{itemize}
		\item the simplifcation to masks without too many vertices from Lemma~\ref{lem: remove-most-vertices},
		\item the calculation of the low-degree likelihood ratio upper bound based on the number of vertices in the mask from Lemma~\ref{lem: calculation-small-vertex}.
	\end{itemize}
	Defining $v_{\sf max} := \left(2t + 2 + \frac{2\lvert M \rvert}{t}\right)$ for notational convenience, this yields
	\begin{align*}
		{\sf LDUB}(n,M) & \leq 1 + \sum\limits_{d = 1}^{D} \frac{1}{d!}\cdot \left(\frac{2d \cdot \log e}{\log \left(\frac{2d \cdot n^2}{v_{\sf max} \cdot k^2}+1\right)}\right)^{2d} \leq 1 + \sum\limits_{d = 1}^{D} \left(\frac{2d \cdot \log e}{ \left(d!\right)^{1/2d} \cdot \log \left(\frac{ n^2}{v_{\sf max} \cdot k^2}\right)}\right)^{2d} \\
		& \leq 1 + \sum\limits_{d = 1}^{D} \left(\frac{2e \cdot \sqrt{d} \cdot \log e}{ \log \left(\frac{ n^2}{v_{\sf max} \cdot k^2}\right)}\right)^{2d} \leq 1 + \sum\limits_{d = 1}^{D} \left(\frac{2e \cdot \sqrt{D} \cdot \log e}{ \log \left(\frac{ n^2}{v_{\sf max} \cdot k^2}\right)}\right)^{2d}\\
		& \leq 1 + \sum\limits_{d = 1}^{\infty} \left(\frac{2e \cdot \sqrt{D} \cdot \log e}{ \log \left(\frac{ n^2}{v_{\sf max} \cdot k^2}\right)}\right)^{2d} \leq \left(1-\left(\frac{2e \cdot \sqrt{D} \cdot \log e}{ \log \left(\frac{ n^2}{v_{\sf max} \cdot k^2}\right)}\right)^{2}\right)^{-1} \\
		& = 1+o(1).
	\end{align*}
	Above, we have used the inequality $\frac{1}{d!} \leq \left(\frac{e}{d}\right)^d$, the formula for the sum of a geometric series, the fact that $d \leq D =  o\left(\left(\log n\right)^2\right)$, and our second hypothesis $v_{\sf max}\cdot  \left(\frac{k^2}{n^2}\right) = \left(2t + 2 + \frac{2\lvert M \rvert}{t}\right) \cdot  \left(\frac{k^2}{n^2}\right) = O(n^{-\epsilon})$.
\end{proof}

\begin{lemma}[Masks without enough edges have small conditional low-degree likelihood ratio]\label{lem:cond-to-ldub}\ \\
	Let $0 < \delta < 1/2$ be a constant. Let $n = \omega(1)$ and $k = \Theta(n^{1/2+ \delta})$ be sequences of positive integers. Let $M$ be a sequence of masks on the ground set $[n]$ without too many edges.  That is,
	\[\lvert M \rvert \leq O\left(n^{\gamma} \right) \text{ for some constant }  \gamma < 3(1/2-\delta).\]
	Then there exists a sequence of subsets $S \subseteq [n]$ such that
	\begin{enumerate}
		\item $\mathop\Prob\limits_{K \sim \Clique(n,k)}\left[\text{all nonzero coordinates of }K \text{ are in } S\right] = 1-o(1)$.
		\item For any sequence of degrees $D = o(\log^2 n)$, the conditional low-degree likelihood ratio upper bound\footnote{Recall that this quantity depends on $k$ and $D$, but we do not denote this for notational simplicity.} (Definition~\ref{def: cldub}) is small: 
		\[{\sf Cond}(n,M,S) = 1 + o(1).\]
	\end{enumerate}
\end{lemma}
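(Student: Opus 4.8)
The plan is to implement Step~1 of the overview in Section~\ref{sec:proof-overview}: restrict attention to a vertex set $S$ of small maximum mask degree, check that the planted clique lands inside $S$ with high probability, and then reduce ${\sf Cond}(n,M,S)$ to an ordinary low-degree likelihood ratio upper bound to which Lemma~\ref{lemma:nicemasks} applies. First I would fix a threshold $2t$, where $t = t_n$ is a positive-integer sequence to be chosen at the very end, and set
\[
{\sf Bad} := \bigl\{v \in [n] : {\sf deg}^M(v) > 2t\bigr\}, \qquad S := [n] \setminus {\sf Bad}.
\]
Since $\sum_{v}{\sf deg}^M(v) = 2\lvert M\rvert$, at most $\lvert M\rvert/t$ vertices can have mask degree above $2t$, so $\lvert{\sf Bad}\rvert\le\lvert M\rvert/t$ and $\lvert S\rvert = n - O(\lvert M\rvert/t)$.

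For the first claim I would union bound over ${\sf Bad}$: as each vertex lies in $K\sim\Clique(n,k)$ with probability $k/n$,
\[
\Prob_{K\sim\Clique(n,k)}\bigl[\,\exists\, i\in{\sf Bad} \text{ with } K_i=1\,\bigr] \;\le\; \lvert{\sf Bad}\rvert\cdot\frac{k}{n} \;\le\; \frac{\lvert M\rvert\,k}{t\,n},
\]
which is $o(1)$ provided $t=\omega(\lvert M\rvert k/n)$, giving the claimed $1-o(1)$ probability. For the second claim the key observation is that draws $X,X'\sim\Clique(n,k,S)$ vanish off $S$, so $Z_i = X_iX_i'=0$ for $i\notin S$, and hence the sum in Definition~\ref{def: cldub} collapses to $\sum_{(i,j)\in M_S}Z_iZ_j$ with $M_S:=\{(i,j)\in M: i,j\in S\}$. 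Since $\Clique(n,k,S)$ is uniform over $k$-subsets of $S$, this shows ${\sf Cond}(n,M,S)$ is exactly the ordinary ${\sf LDUB}$ (Definition~\ref{def: ldub}) computed with ground set $S$ of size $\lvert S\rvert$, clique size $k$, degree $D$, and mask $M_S$. I would then apply Lemma~\ref{lemma:nicemasks} with $\lvert S\rvert$ in place of $n$: its degree hypothesis holds with the same $t$, since ${\sf deg}^{M_S}(v)\le{\sf deg}^M(v)\le 2t$ for all $v\in V(M_S)\subseteq S$; and, using $\lvert S\rvert=\Theta(n)$ (so $k^2/\lvert S\rvert^2=\Theta(n^{2\delta-1})$) and $\lvert M_S\rvert\le\lvert M\rvert=O(n^\gamma)$, its second hypothesis $\bigl(2t+2+2\lvert M_S\rvert/t\bigr)k^2/\lvert S\rvert^2 = O(\lvert S\rvert^{-\epsilon})$ reduces to making $\bigl(t+n^\gamma/t\bigr)n^{2\delta-1}$ polynomially small. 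Lemma~\ref{lemma:nicemasks} then yields ${\sf Cond}(n,M,S)=1+o(1)$ for every $D=o(\log^2 n)$.

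What is left — and the only place the exact exponent $3(1/2-\delta)$ enters — is to exhibit a single $t$ meeting both requirements. Writing $t=\lceil n^\alpha\rceil$, the union bound needs $\alpha>\gamma-1/2+\delta$, while polynomial smallness of $\bigl(t+n^\gamma/t\bigr)n^{2\delta-1}$ needs $\max\{\alpha,\ \gamma-\alpha\}<1-2\delta$, i.e.\ $\gamma-1+2\delta<\alpha<1-2\delta$. Since $\gamma-1+2\delta<\gamma-1/2+\delta$ whenever $\delta<1/2$, these constraints collapse to $\max\{\gamma-1/2+\delta,\,0\}<\alpha<1-2\delta$, and an admissible $\alpha$ exists precisely because $1-2\delta>0$ and $\gamma-1/2+\delta<1-2\delta$; the latter is exactly $\gamma<3(1/2-\delta)$. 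Choosing any such $\alpha$ (so that $t=\Theta(n^\alpha)$ is a legitimate positive-integer sequence) finishes the argument. I expect this two-sided balancing of $t$ to be the crux: morally the ``$3$'' is forced because the clique must dodge up to $\lvert M\rvert/t$ high-degree vertices (wanting $t$ large) while the residual vertex count $\approx t+\lvert M\rvert/t$ controlling ${\sf LDUB}$ must stay polynomially smaller than $n^2/k^2$ (wanting $t$ small), and these windows overlap exactly when $\lvert M\rvert=O(n^{3(1/2-\delta)})$; everything else is a union bound, a bookkeeping reduction, and a citation of Lemma~\ref{lemma:nicemasks}.
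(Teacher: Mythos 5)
Your proof is correct and follows the same strategy as the paper's: excise high-mask-degree vertices to form $S$, union-bound the probability that the clique avoids the excised set, observe that ${\sf Cond}(n,M,S)$ collapses to ${\sf LDUB}$ on the smaller ground set $S$ with mask $M_S$, and invoke Lemma~\ref{lemma:nicemasks}. The only cosmetic difference is that the paper fixes the threshold concretely as $t = \lvert M\rvert \cdot (k/n)\cdot \log n$ up front, whereas you leave $t = \Theta(n^\alpha)$ as a free parameter and verify at the end that the admissible window for $\alpha$ is nonempty precisely when $\gamma < 3(1/2-\delta)$ — a nice way to make the origin of the exponent transparent.
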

\begin{proof}
	Let ${\sf grow} = \log n = \omega(1)$\footnote{Any sequence that grows to infinity slower than a polynomial would work.}.
	
	Let $t := \lvert M \rvert \cdot \frac{k}{n} \cdot {\sf grow}$ and define $S \subseteq [n]$ as the subset of vertices in $M$ whose $M$-degree (Definition~\ref{def:mask}) is at most $2t$. Because there are only $\lvert M \rvert$ edges in $M$, we must have $\lvert [n] \setminus S \rvert \cdot \frac{2t}{2} \leq \lvert M \rvert$ by the pigeonhole principle. This gives $\lvert [n] \setminus S \rvert  \leq \frac{\lvert M \rvert}{t}$.
	\begin{enumerate}
		\item By a union bound, the probability that $K \sim \Clique(n,k)$ has a nonzero coordinate in $[n] \setminus S$ is at most 
		\[\lvert [n] \setminus S \rvert \cdot \frac{k}{n}  \leq \frac{\lvert M \rvert}{t} \cdot \frac{k}{n} =\frac{1}{{\sf grow}} \leq o(1).\] Thus we have \[\mathop\Prob\limits_{K \sim \Clique(n,k)}\left[\text{all nonzero coordinates of }K \text{ are in } S\right] \geq 1-o(1).\]
		\item Let $n' := \lvert S \rvert$ and fix any bijection $\phi : [n'] \rightarrow S$. Define the mask $M_S$ on ground set $[n']$ as \[M_S := \left\{(i,j) \in {[n'] \choose 2} : \left(\phi(i),\phi(j)\right) \in M \right\}.\] This is the natural restriction of the mask $M$ onto a ground set of size $n'$ corresponding to $S$.
		It is straightforward to observe that for any $n,k,D,M$ we have the following equality between a conditional low-degree likelihood ratio upper bound and a low-degree likelihood ratio upper bound:
		\[{\sf Cond}(n,M,S) = {\sf LDUB}(n',M_S).\]
		Further,
		\begin{enumerate}
			\item By construction, $n' = \Theta(n) = \omega(1)$ and $k = \Theta\left((n')^{1/2+\delta}\right)$.
			\item $\max\limits_{i \in V(M_S)} {\sf deg}^{M_S}(i)  \leq 2t$. That is, the maximum mask degree of any vertex in $V(M_S)$ is at most $2t$. This is because the $M_S$-degree of any vertex $i \in [n']$ is at most the $M$-degree of the vertex $\phi(i) \in S$, and the latter is at most $2t$ by construction of $S$.
			\item By construction, we also have $\lvert M_S \rvert \leq \lvert M \rvert \leq O\left(n^{\gamma} \right)$. Using this with the definition of ${\sf grow}$ and $t$ and the facts $0 < \delta < 1/2$, $ \gamma < 3(1/2-\delta)$ gives
			\[\left(2t + 2 + \frac{2 \lvert M_S \rvert}{t}\right) \cdot \left(\frac{k^2}{(n')^2}\right) \leq O\left(\left(n'\right)^{-\epsilon'}\right).\] for some constant $\epsilon' > 0$.
		\end{enumerate}
		This lets us invoke Lemma~\ref{lemma:nicemasks} to conclude ${\sf LDUB}(n',M_S) \leq 1+o(1)$ and complete the proof.
	\end{enumerate}
\end{proof}

\begin{proofof}{Theorem~\ref{thm:main}(a)}
	We now prove the lower bound. Following~\cite[Proposition~6.2]{fp}, to rule out weak separation by degree-$D$ polynomials, it suffices to show that the \emph{norm of the conditional low-degree likelihood ratio}
\begin{equation}\label{eq:ld-norm-def}
	\left\|\left(\frac{\mathrm{d}\mathbb{P}}{\mathrm{d}\mathbb{Q}}\right)^{\le D}\right\|_{\mathbb{Q}} := \sup_{{\sf deg}(f) \le D} \frac{\mathbb{E}_{\mathbb{P}}[f]}{\sqrt{\mathbb{E}_{\mathbb{Q}}[f^2]}}
\end{equation}
is $1+o(1)$, where $\mathbb{Q} = G(n,M)$, and $\mathbb{P}$ is $G(n,k,M)$ conditioned on some $(1-o(1))$-probability event (which may depend on latent randomness such as the clique vertices). For our purposes, we choose to condition $G(n,k,M)$ on the event that all clique vertices lie in $S$, the set defined in Lemma~\ref{lem:cond-to-ldub}. Note that Lemma~\ref{lem:cond-to-ldub} guarantees this to be a $(1-o(1))$--probability event. We have the bounds
\[ 1 \le \left\|\left(\frac{\mathrm{d}\mathbb{P}}{\mathrm{d}\mathbb{Q}}\right)^{\le D}\right\|_{\mathbb{Q}}^2 \le {\sf Cond}(n,M,S), \]
where the first inequality comes from plugging in $f = 1$ to~\eqref{eq:ld-norm-def} and the second comes from~\cite[Proposition~B.1]{spectral-planting}. Now the proof is complete, as Lemma~\ref{lem:cond-to-ldub} shows ${\sf Cond}(n,M,S) \leq 1 + o(1)$. \end{proofof}

\section{Proof of the upper bound: Theorem~\ref{thm:main}(b)}\label{sec:proof-ub}
We begin with a few preliminaries before turning to the proof of the theorem.  Our strategy is to implement the degree counting algorithms of~\cite{kuvcera1995expected,mardia2020finding} via low-degree polynomials.  To this end, we note that it suffices to furnish a mask $M$ with $\lvert M \rvert = O(n^{\gamma})$, where $\gamma > 3(1/2 - \delta)$ and a polynomial $f$ which, when evaluated on the masked observations corresponding to $M$, strongly separates (in the sense of Definition~\ref{def: separ}) the distributions $\PP = G(n, k, M)$ (see Definition~\ref{def:mask-pc}) and $\QQ = G(n,M)$ (see Definition~\ref{def:er}).  

Towards constructing this mask, we define the gap $\epsilon = \gamma - 3 (1/2 - \delta) > 0$ and the pair $R$ and $L$ as
\begin{align}\label{def:V-R}
R = \min\bigl\{\lceil (n/k)^{2} \cdot n^{2\epsilon/3} \rceil, \lceil n/2 \rceil\bigr\} \qquad \text{ and } \qquad  L = \lceil \sqrt{R} \rceil.
\end{align}
Note that $R \leq \lceil n/2 \rceil$ and $L \leq \sqrt{n} = o(n)$.  This thus ensures that the vertex sets $V_L \coloneqq \{1, 2, \ldots, L\}$ and $V_{R} \coloneqq \{n - R + 1, \ldots, n\}$ are disjoint.  We will consider the `rectangular' mask $M = V_L \times V_R$, observing that it satisfies $\lvert M \rvert = R \cdot L = O(n^{\gamma})$ by construction.

We turn now to the construction of our distinguishing polynomial $f: \{-1, 1\}^{V_L \times V_R} \rightarrow \mathbb{R}$.  In order to build intuition, let $K_1, K_2, \ldots, K_n$ denote binary indicators of whether or not vertex $i \in [n]$ belongs to the planted clique.  That is, under the null distribution $\QQ$, each of the $\{K_i\}_{i \in [n]}$ are identically zero, whereas under the planted distribution $\PP$, $\{K_i\}_{i \in [n]} \sim \mathsf{Clique}(n, k)$ (as in Definition~\ref{def:cl-ind}).  In the sequel, we show that the polynomial $(K_1, K_2, \ldots, K_L) \mapsto \sum_{i \in V_L}\, K_i$  strongly separates $\PP$ and $\QQ$.  Our distinguishing polynomial emulates this oracle polynomial by thresholding estimates related to the degree counts for each vertex in $V_L$.  To do so, we require the following lemma from~\cite{schramm2022computational}, which provides a polynomial approximation to the binary threshold function. 

\begin{lemma}[\cite{schramm2022computational}, Prop. 4.1]\label{lem:threshold-poly}
	For any integer $\ell \in \mathbb{Z}_{\geq 0}$, consider the degree-$(2\ell + 1)$ polynomial $\tau = \tau_{\ell}: \mathbb{R} \rightarrow \mathbb{R}$, $\tau(y) = (2 \ell + 1)\binom{2\ell}{\ell} \int_{0}^{y} t^\ell (1 - t)^{\ell} \mathrm{d}t$.  For any $b \in \{0, 1\}$ and any $0 \leq \Delta \leq \frac{1}{2}$, the following holds
	\[
	\bigl \lvert \tau(y) - b \bigr \rvert \leq \Bigl(\ell + \frac{1}{2}\Bigr) (6\Delta)^{\ell}, \qquad \text{ for any } y \in \mathbb{R} \text{ such that } \qquad \lvert y - b \rvert \leq \Delta.
	\]
\end{lemma}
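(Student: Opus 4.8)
The plan is to read $\tau_\ell$ as a smoothed step function: it is the antiderivative of the normalized bump $t \mapsto t^\ell(1-t)^\ell$, pinned so that $\tau_\ell(0) = 0$. The first task is to record the normalization. Using the Beta integral $\int_0^1 t^\ell(1-t)^\ell\,\mathrm{d}t = \frac{(\ell!)^2}{(2\ell+1)!}$ together with $\binom{2\ell}{\ell} = \frac{(2\ell)!}{(\ell!)^2}$, one checks that the prefactor $(2\ell+1)\binom{2\ell}{\ell}$ is exactly what makes $\tau_\ell(1) = 1$, while $\tau_\ell(0) = 0$ is immediate from the definition. So $\tau_\ell$ interpolates the step from $0$ to $1$ as $y$ runs from $0$ to $1$, and the claimed estimate is really a statement about how sharp this interpolation is near the two endpoints.

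Next I would establish the reflection symmetry $\tau_\ell(y) = 1 - \tau_\ell(1-y)$. This follows since both sides have the same derivative — indeed $\tau_\ell'(y) = (2\ell+1)\binom{2\ell}{\ell} y^\ell (1-y)^\ell$ is invariant under $y \mapsto 1-y$ — and they agree at $y = 0$, where the left side is $0$ and the right side is $1 - \tau_\ell(1) = 0$. Consequently $\lvert \tau_\ell(y) - 1 \rvert = \lvert \tau_\ell(1-y) \rvert$, and the substitution $y \mapsto 1-y$ (which sends $\lvert y - 1 \rvert \le \Delta$ to $\lvert (1-y) - 0 \rvert \le \Delta$) reduces the $b = 1$ case of the lemma to the $b = 0$ case. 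So it suffices to prove the bound for $b = 0$.

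For $b = 0$ and $\lvert y \rvert \le \Delta \le 1/2$, I would bound the defining integral crudely. On the segment between $0$ and $y$ (contained in $[-\Delta, \Delta]$), the integrand satisfies $\lvert t^\ell (1-t)^\ell \rvert = \lvert t \rvert^\ell \lvert 1 - t \rvert^\ell \le \Delta^\ell (1+\Delta)^\ell \le \Delta^\ell (3/2)^\ell$, so $\lvert \tau_\ell(y) \rvert \le (2\ell+1)\binom{2\ell}{\ell}\,\lvert y \rvert\,\Delta^\ell (3/2)^\ell$. Plugging in $\binom{2\ell}{\ell} \le 4^\ell$ and $\lvert y \rvert \le \Delta$ gives $\lvert \tau_\ell(y) \rvert \le (2\ell+1)\,6^\ell\,\Delta^{\ell+1} = (2\ell+1)\Delta\,(6\Delta)^\ell \le (\ell + \tfrac12)(6\Delta)^\ell$, where the last step spends the remaining factor $\Delta \le 1/2$. (The edge case $\ell = 0$ is even simpler, since then $\tau_0(y) = y$ and $\lvert \tau_0(y) - b \rvert = \lvert y \rvert \le \Delta \le \tfrac12 = (\ell+\tfrac12)(6\Delta)^\ell$.)

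I do not expect a genuine obstacle here. The only point requiring care is bookkeeping of constants: routing the elementary bounds $\binom{2\ell}{\ell} \le 4^\ell$ and $(1+\Delta)^\ell \le (3/2)^\ell$ so that they fuse into the factor $6^\ell$, and then using $\Delta \le 1/2$ to pass from $(2\ell+1)$ to $(\ell + \tfrac12)$. The reflection identity is what keeps the whole argument to a few lines, so verifying it carefully (via the derivative computation and the single point value) is the one step I would double-check.
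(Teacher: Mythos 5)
Your argument is correct: the Beta-integral normalization makes $\tau_\ell(1)=1$, the reflection identity $\tau_\ell(y)=1-\tau_\ell(1-y)$ (verified by matching derivatives and one endpoint) reduces $b=1$ to $b=0$, and the crude bound $\lvert t^\ell(1-t)^\ell\rvert\le \Delta^\ell(3/2)^\ell$ on the integration segment, together with $\binom{2\ell}{\ell}\le 4^\ell$ and the leftover factor $\Delta\le 1/2$, gives exactly $(\ell+\tfrac12)(6\Delta)^\ell$. Note that the paper does not prove this lemma — it is imported verbatim from~\cite{schramm2022computational}, Prop.~4.1 — but your reconstruction matches the standard route taken there (normalize, use the $y\mapsto 1-y$ symmetry, bound the integrand near the endpoint), so there is nothing to flag.
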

Let $Y = \{Y_{ij}\}_{i \in [V_L], j \in [V_R]} \in \{-1, 1\}^{V_L \times V_R}$ denote our observations.  Using $\tau = \tau_{\ell}$ as defined in Lemma~\ref{lem:threshold-poly}, with $ \ell = \lceil 3/\epsilon + 1/\delta\rceil$, we define our separating polynomial $f: \{-1, 1\}^{L \times R} \rightarrow \mathbb{R}$ as
\begin{align}
	\label{def:separating-polynomial}
	f(Y) = \sum_{i \in V_L} \tau\Bigl(\frac{n}{k} \cdot\frac{1}{R}\sum_{j \in V_R} Y_{ij} \Bigr).
\end{align}
For convenience, we will use the shorthand
\[
g_i(Y) = \frac{n}{k} \cdot \frac{1}{R}\sum_{j \in V_R} Y_{ij} .
\]
The key property of our polynomial $f$ is that $\tau(g_i(Y))$ emulates the clique indicators $K_i$ up to a small error, as summarized by the following lemma.
\begin{lemma}
	\label{lem:second-moment-bound-tau}
	 Under the conditions of Theorem~\ref{thm:main}, let $L$ and $R$ be as in~\eqref{def:V-R} and consider the mask $M = V_L \times V_R$.  Let $Y \in \{-1, 1\}^{M}$ denote the observations. 
  Suppose that $ \ell = \lceil 3/\epsilon + 1/\delta\rceil$ and let $\tau = \tau_{\ell}$ be as defined in Lemma~\ref{lem:threshold-poly}.  Then, if either $Y \sim \PP = G(n, k, M)$ or $Y \sim \QQ = G(n, M)$, both
	\begin{align} \label{ineq:second-moment-bound-tau}
		\EE_{\PP}\bigl[\{\tau(g_1(Y)) - K_1\}^2\bigr] = o\Bigl(\Bigl(\frac{k}{n}\Bigr)^2\Bigr)\qquad \text{ and } \qquad 
		\EE_{\QQ}\bigl[\{\tau(g_1(Y))\}^2\bigr] = o\Bigl(\Bigl(\frac{k}{n}\Bigr)^{2}\Bigr).
	\end{align}
\end{lemma}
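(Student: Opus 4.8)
The plan is to show that the random variable $\tau(g_1(Y))$ equals the clique indicator $K_1$ up to a $o(k/n)$ error with overwhelming probability, and that on the rare complementary event one can still control $\tau(g_1(Y))$ crudely because its argument always lies in the bounded interval $[-n/k,\,n/k]$ (since $g_1(Y)$ is $\tfrac nk$ times an average of $\pm1$'s). Concretely, set $c := \min\{\epsilon/3,\delta\}$, choose the threshold $\Delta := n^{-c}\log n$ (which is $\le \tfrac12$ for large $n$), and introduce the event $\mathcal E := \{\,|g_1(Y)-K_1|\le\Delta\,\}$. I would split
$\E[(\tau(g_1(Y))-K_1)^2] = \E[(\tau(g_1(Y))-K_1)^2\mathbf{1}_{\mathcal E}] + \E[(\tau(g_1(Y))-K_1)^2\mathbf{1}_{\mathcal E^c}]$. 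On $\mathcal E$, since $K_1\in\{0,1\}$, Lemma~\ref{lem:threshold-poly} gives $|\tau(g_1(Y))-K_1|\le(\ell+\tfrac12)(6\Delta)^\ell$, so this term is at most $[(\ell+\tfrac12)(6\Delta)^\ell]^2$. On $\mathcal E^c$, I would bound $|\tau(y)|\le C_\ell(1+|y|)^{2\ell+1}$ on $[-n/k,n/k]$ using the explicit coefficients of $\tau$ (degree $2\ell+1$, coefficients controlled by $(2\ell+1)\binom{2\ell}{\ell}2^{2\ell}$), yielding $(\tau(g_1(Y))-K_1)^2\le C_\ell'(n/k)^{4\ell+2}$, a fixed power of $n$; multiplying by $\Pr[\mathcal E^c]$ then finishes that term, provided $\Pr[\mathcal E^c]$ decays faster than any polynomial.

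The core of the argument is the concentration estimate $\Pr[\mathcal E^c]=n^{-\omega(1)}$, which I would prove by conditioning on the planted clique $K$. If $1\notin K$ (which under $\QQ$ is always the case, and under $\PP$ has probability $1-k/n$), then $K_1=0$ and the edge variables $\{Y_{1j}\}_{j\in V_R}$ are i.i.d.\ uniform on $\{\pm1\}$, so $g_1(Y)$ is $\tfrac{n}{kR}$ times a sum of $R$ independent $\pm1$'s and a Hoeffding bound gives $\Pr[|g_1(Y)|>\Delta]\le 2\exp(-\tfrac{\Delta^2k^2R}{2n^2})$. If $1\in K$, then $K_1=1$; writing $m:=|K\cap V_R|$ we have $\sum_{j\in V_R}Y_{1j}=m+\sum_{j\in V_R\setminus K}Y_{1j}$ with the latter sum over independent $\pm1$'s, and I would decompose $g_1(Y)-1$ into (i) the centered hypergeometric term $\tfrac{n}{kR}\bigl(m-\E[m\mid 1\in K]\bigr)$, (ii) a deterministic bias $\tfrac{n}{kR}\E[m\mid 1\in K]-1 = O(1/k)$ arising from $\tfrac{k-1}{n-1}$ versus $\tfrac kn$, and (iii) the Hoeffding term $\tfrac{n}{kR}\sum_{j\in V_R\setminus K}Y_{1j}$. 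Each of (i) and (iii) is at most $\Delta/3$ outside a super-polynomially small event — for (i) via a Bernstein/Bennett tail bound for the hypergeometric distribution, using $\mathrm{Var}(m)\lesssim kR/n$ — and (ii) is $o(\Delta)$ for large $n$ since $1/k=n^{-(1/2+\delta)}=o(n^{-c})$. A union bound over the three pieces, then averaging over $K$ (and over $1\in K$ versus $1\notin K$), gives $\Pr[\mathcal E^c]=n^{-\omega(1)}$ under both $\PP$ and $\QQ$.

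It then remains to verify that $\Delta=n^{-c}\log n$ with $\ell=\lceil 3/\epsilon+1/\delta\rceil$ makes both terms of the split $o((k/n)^2)$. The ``bad-event'' term is automatically $o((k/n)^2)$ once $\Pr[\mathcal E^c]=n^{-\omega(1)}$, since its prefactor is a fixed power of $n$. For the ``good-event'' term, I would first record that the typical fluctuation scale $\tfrac{n}{k\sqrt R}$ is $O(n^{-c})$ via a short case analysis on whether $R=\lceil (n/k)^2n^{2\epsilon/3}\rceil$ or $R=\lceil n/2\rceil$, noting that in the latter case necessarily $\delta\le\epsilon/3$; this is precisely what makes the Hoeffding and Bernstein exponents above diverge (like $\log^2 n$ or $n^{\Omega(1)}$) under the choice $\Delta=n^{-c}\log n$. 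Then $(6\Delta)^\ell = n^{-\ell c}\,\cdot n^{o(1)}$, and since $\ell\ge 3/\epsilon$ and $\ell\ge 1/\delta$ we get $\ell c=\min\{\ell\epsilon/3,\ell\delta\}\ge 1>1/2-\delta$, so $[(\ell+\tfrac12)(6\Delta)^\ell]^2=o(n^{-2(1/2-\delta)})=o((k/n)^2)$. The bound for $\QQ$ is the special case $K_1\equiv0$ (only the ``$1\notin K$'' branch occurs), so it follows from the same argument. The step I expect to be the main obstacle is the planted-branch concentration: one needs a hypergeometric tail bound sharp enough (Bernstein/Bennett rather than the crude Serfling bound in terms of the sample size $k$) to survive across the entire parameter range $0<\delta<1/2$, $0<\epsilon<1/2+3\delta$, together with the bookkeeping tying $\ell=\lceil 3/\epsilon+1/\delta\rceil$ simultaneously to the approximation error and to the tail exponents in both regimes of $R$.
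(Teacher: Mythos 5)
Your proposal is correct, and it follows the same high-level outline as the paper's proof: split on a good event where the argument of $\tau$ is close to $K_1$, apply the threshold-approximation bound of Lemma~\ref{lem:threshold-poly} there, and on the bad event multiply a crude polynomial bound on $\tau(g_1(Y))$ (valid because $|g_1(Y)|\le n/k$) by a super-polynomially small tail probability. Where you genuinely diverge is in how concentration is established under the planted distribution when $K_1=1$. The paper keeps $\{Y_{1j}\}_{j\in V_R}$ as a single dependent family, proves it is negatively associated (via the permutation-distribution characterization of~\cite{joagdev}), and then invokes Shao's NA-to-independent moment/MGF comparison~\cite{shao} to port Bernstein's inequality wholesale. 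You instead condition on $m=|K\cap V_R|$ and decompose $g_1(Y)-1$ into a hypergeometric fluctuation, a deterministic $O(1/k)$ bias, and a Hoeffding term over the non-clique neighbors, with a union bound over the three. Your route is more elementary (standard hypergeometric and Hoeffding tails, no NA machinery) and has the side benefit of treating the bias explicitly — the paper writes $\EE_{\PP}[g_i(Y)\mid K_i=1]=1$ and centers at $Rk/n$, whereas the exact conditional mean is $n(k-1)/(k(n-1))$, a discrepancy of order $1/k$ which you correctly account for and check is $o(\Delta)$. Two smaller stylistic differences: you split at a fixed threshold $\Delta=n^{-c}\log n$ and bound the good-event term uniformly by $[(\ell+\tfrac12)(6\Delta)^\ell]^2$, while the paper splits at $1/2$ and then integrates the tail to bound $\EE[|g_1(Y)-K_1|^\ell]$; and you bound the bad-event contribution by $(\text{crude sup bound})\cdot\Pr[\mathcal E^c]$ directly, while the paper routes it through Cauchy--Schwarz. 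Both sets of choices lead to the same conclusion, and your case analysis on $R$ showing $n/(k\sqrt R)=O(n^{-c})$ with $c=\min\{\epsilon/3,\delta\}$, together with $\ell\ge\max\{3/\epsilon,1/\delta\}$ forcing $\ell c\ge1>1/2-\delta$, correctly closes the loop.
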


We defer the proof of this lemma to the end of the section.  Equipped with this lemma, we turn to the proof of Theorem~\ref{thm:main}(b).  

\begin{proofof}{Theorem~\ref{thm:main}(b)}
We turn to lower bounding the expectation gap and upper bounding the variance induced by the polynomial $f$ in~\eqref{def:separating-polynomial}.

\noindent \underline{Lower bounding the expectation gap:} 
    Expanding yields 
\begin{align*}
	\bigl \lvert \EE_{\PP}[f(Y)] - \EE_{\QQ}[f(Y)] \bigr \rvert &= L \bigl \lvert \EE_{\PP}[K_1] +  \EE_{\PP}[\tau(g_1(Y)) - K_1] -  \EE_{\QQ}[\tau(g_1(Y)) - K_1] \bigr \rvert \\
	&= L \Bigl \lvert \frac{k}{n} +  \EE_{\PP}[\tau(g_1(Y)) - K_1] -  \EE_{\QQ}[\tau(g_1(Y))] \Bigr \rvert,
\end{align*}
where we have used the fact that $K_1 = 0$ under the null distribution $\QQ$.  
Then, applying the triangle inequality in conjunction with Jensen's inequality yields
\begin{align*}
	\bigl \lvert \EE_{\PP}[f(Y)] - \EE_{\QQ}[f(Y)] \bigr \rvert  &\geq \frac{Lk}{n} - L \bigl \lvert \EE_{\PP}[g_1(Y) - K_1 ] - \EE_{\QQ} [g_1(Y)] \bigr \rvert \\
	&\geq \frac{Lk}{n} - L\sqrt{ \EE_{\PP}\bigl[ \{g_1(Y) - K_1 \}^2 \bigr]} - L\sqrt{\EE_{\QQ}\bigl[ \{g_1(Y)\}^2 \bigr]}.
\end{align*}
We conclude by applying Lemma~\ref{lem:second-moment-bound-tau} to obtain the bound
\[
\bigl \lvert \EE_{\PP}[f(Y)] - \EE_{\QQ}[f(Y)] \bigr \rvert = \frac{Lk}{n} - o\Bigl(\frac{Lk}{n}\Bigr) = \Omega\bigl( \min\{n^{\delta}, n^{\epsilon/3}\}\bigr).
\] 

\medskip
\noindent \underline{Upper bounding the variance:}

\noindent \textit{Planted distribution:}
\[
\Var_{\PP}(f) = \Var_{\PP} \biggl(\sum_{i \in V_L} [\tau(g_i(Y)) - K_i] + K_i \biggr) \leq 2\Var_{\PP} \biggl(\sum_{i \in V_L} [\tau(g_i(Y)) - K_i]\biggr) + 2 \Var_{\PP} \biggl(\sum_{i \in V_L} K_i\biggr).
\]
To bound the first term, we apply Lemma~\ref{lem:second-moment-bound-tau} to obtain
\[
2\Var_{\PP} \biggl(\sum_{i \in V_L} [\tau(g_i(Y)) - K_i]\biggr) \leq 2L^2 \EE_{\PP}\bigl[\{\tau(g_1(Y)) - K_1\}^2\bigr] = o\bigl(\bigl[ \min\{n^{\delta}, n^{\epsilon/3}\}\bigr]^2\bigr). \] 

Moreover, we compute
\begin{align*}
2 \Var_{\PP} \biggl(\sum_{i \in V_L} K_i\biggr) = 2L \Var_{\PP}(K_1) + 2 L (L-1) \mathsf{Cov}_{\PP}(K_1, K_2) &\leq 2L \Bigl(\frac{k}{n}\Bigr) + 2L (L-1) \Bigl(\frac{k \cdot (k-1)}{n \cdot (n -1)} - \frac{k^2}{n^2}\Bigr) \\
&= O\bigl( \min\{n^{\delta}, n^{\epsilon/3}\}\bigr).
\end{align*}
\noindent \textit{Null distribution:}
Proceeding similarly yields 
\[
\Var_{\QQ}(f) \leq 2L^2 \EE_{\QQ}[\tau(g_1(Y))^2] =  o\bigl(\bigl[ \min\{n^{\delta}, n^{\epsilon/3}\}\bigr]^2\bigr).
\]

Putting the pieces together then yields
\[
\Bigl(\max\bigl\{\Var_{\PP}(f), \Var_{\QQ}(f)\bigr\}\Bigr)^{1/2} = o\bigl(\min\{n^{\delta}, n^{\epsilon/3}\}\bigr) = o\bigr(\bigl \lvert \EE_{\PP}[f(Y)] - \EE_{\QQ}[f(Y)] \bigr \rvert\bigr),
\]
which confirms that $f$ strongly separates $\PP$ and $\QQ$.  
\end{proofof}

\begin{proofof}{Lemma~\ref{lem:second-moment-bound-tau}}
		Note that, conditioned on the event that vertex $i$ is not contained in the clique, $\EE_{\PP}[g_i(Y) \; \vert \; K_i = 0] = 0$, whereas conditioned on the event that vertex $i$ is contained in the clique, $\EE_{\PP}[g_i(Y) \; \vert \; K_i = 1] = 1$.  Then, conditioned on $K_i = 0$, by Bernstein's inequality (e.g.,~\cite[Theorem  2.8.4]{vershynin2018high}),
		\begin{align*}
			\PP\biggl(\biggl \lvert \sum_{j \in V_R}\, Y_{ij}  \biggr \rvert \geq t \; \Big \vert \; K_i = 0\biggr) \leq 2 \exp\Bigl(-\frac{t^2/2}{R + t/3}\Bigr).
		\end{align*}
		We next express $Y_{ij}$ as $Y_{ij} = (1 - K_i K_j) A_{ij} + K_i K_j$, where $A_{ij} \overset{\mathsf{i.i.d.}}{\sim} \mathsf{Rademacher}(1/2)$ for $1 \leq i < j \leq n$.  Then, conditionally on the event $\{K_i = 1\}$, the collection $\{Y_{ij}\}_{j \in V_R}$ are monotone functions of the negatively associated random variables $\{K_j\}_{j \in V_R}$ and the independent random variables $\{A_{ij}\}_{j \in V_R}$.  Consequently, by~\cite[Property P6]{joagdev} we find that $\{Y_{ij}\}_{j \in V_R}$ forms a negatively associated collection, conditionally on the event $\{K_i = 1\}$.
		Next, let $\{Y^{\ast}_{ij}\}_{j \in V_R}$ denote a collection of independent random variables with the same marginal distributions as $Y_{ij}$.  Further let $S = \sum_{j \in V_R} Y_{ij} - \EE[Y_{ij} \; \vert \; K_i = 1]$ and define $S^{\ast}$ similarly.  Then, applying~\cite[Theorem 1]{shao}, we obtain the MGF bound $\EE[\exp\{\lambda S\} \; \vert \; K_i = 1] \leq \EE[\exp\{\lambda S^\ast\}]$, for all $\lambda \in \mathbb{R}$ such that the RHS exists.
		The discussion above thus shows that Bernstein's inequality for bounded random variables~\cite[Theorem 2.8.4]{vershynin2018high} continues to hold for the collection $\{Y_{ij}\}_{j \in V_R}$, whence we obtain the inequality 
		\begin{align*}
			\PP\biggl(\biggl \lvert \sum_{j \in V_R}\, Y_{ij} - R \cdot \frac{k}{n} \biggr \rvert \geq t \; \Big \vert \; K_i = 1\biggr) \leq 2 \exp\biggl(-\frac{t^2/2}{R + 2t/3}\biggr).
		\end{align*}
		Combining the previous two displays yields the inequality
		\begin{align}\label{ineq:vertex-tail-bound}
			\PP\biggl(\bigl\lvert g_i(Y) - K_i \bigr \rvert \geq t \cdot \frac{n}{kR} \biggr) \leq 2 \exp\biggl(-\frac{t^2/2}{R + 2t/3}\biggr).
		\end{align}
		Equipped with this concentration inequality, we turn to bounding the second moment $\EE_{\PP}[\{\tau(g_1(Y)) - K_1\}^2]$, which we decompose as 
		\begin{align}
			\EE_{\PP}\bigl[\{\tau(g_1(Y)) - K_1\}^2\bigr] &= \EE_{\PP}\bigl[\{\tau(g_1(Y)) - K_1\}^2\mathbbm{1}\{\lvert g_1(Y) - K_1 \rvert \leq 1/2\}\bigr] \nonumber\\
			&\qquad \qquad + \EE_{\PP}\bigl[\{\tau(g_1(Y)) - K_1\}^2\mathbbm{1}\{\lvert g_1(Y) - K_1 \rvert > 1/2\}\bigr].
		\end{align}
		We claim the following two upper bounds, deferring their proofs to the end
		\begin{subequations}
			\begin{align}
				\EE_{\PP}\bigl[\{\tau(g_1(Y)) - K_1\}^2\mathbbm{1}\{\lvert g_1(Y) - K_1 \rvert \leq 1/2\}\bigr] &\leq C_{\ell} \cdot \max\Bigl\{n^{-\epsilon \ell/3}, n^{-\delta \ell}\Bigr\}\label{ineq:truncated-lower}\\
				\EE_{\PP}\bigl[\{\tau(g_1(Y)) - K_1\}^2\mathbbm{1}\{\lvert g_1(Y) - K_1 \rvert > 1/2\}\bigr] &\leq C_{\ell} \cdot n^{4\ell + 2}\max\Bigl\{e^{-c n^{2\epsilon/3}}, e^{-cn^{2\delta}} \Bigr\}, \label{ineq:truncated-upper}
			\end{align}
		\end{subequations}
		where $C_{\ell}$ denotes a constant which depends only on $\ell$ and may change line by line.  Then, taking $n$ large enough to ensure that the RHS of inquality~\eqref{ineq:truncated-upper} is upper bounded by $C_{\ell} \cdot \max\{n^{-\epsilon \ell/3}, n^{-\delta \ell}\}$ and repeating similar steps under $\QQ$, we obtain the pair of bounds
		\begin{align*}
			\EE_{\PP}\bigl[\{\tau(g_1(Y)) - K_1\}^2\bigr] &\leq C_{\ell} \cdot \max\Bigl\{n^{-\epsilon \ell/3}, n^{-\delta \ell}\Bigr\} \qquad \text{ and } \nonumber \\
			\EE_{\QQ}\bigl[\{\tau(g_1(Y))\}^2\bigr] &\leq C_{\ell} \cdot \max\Bigl\{n^{-\epsilon \ell/3}, n^{-\delta \ell}\Bigr\},
		\end{align*}
	as desired.  The desired result follows by noting that $ \ell \geq \max\{3/\epsilon, 1/\delta\}$ and $1/n = o((k/n)^2)$.  It remains to establish the pair of inequalities~\eqref{ineq:truncated-lower} and~\eqref{ineq:truncated-upper}.
 
	\paragraph{Proof of the inequality~\eqref{ineq:truncated-lower}.}
	Applying Lemma~\ref{lem:threshold-poly} yields 
	\begin{align*}
		\EE_{\PP}\bigl[\{\tau(g_1(Y)) - K_1\}^2\mathbbm{1}\{\lvert g_1(Y) - K_1 \rvert \leq 1/2\}\bigr] &\leq 6^{\ell}(\ell + 1/2)\EE_{\PP}\bigl[\lvert g_1(Y) - K_1 \rvert^{\ell}\mathbbm{1}\{\lvert g_1(Y) - K_1 \rvert \leq 1/2\}\bigr]\\
		&\leq  6^{\ell}(\ell + 1/2)\EE_{\PP}\bigl[\lvert g_1(Y) - K_1 \rvert^{\ell}\bigr].
	\end{align*}
	We then integrate and apply the tail bound~\eqref{ineq:vertex-tail-bound} to obtain the inequality
	\begin{align*}
		\EE_{\PP}\bigl[\lvert g_1(Y) - K_1 \rvert^{\ell}\bigr] &= \int_{0}^{\infty}\; \ell t^{\ell - 1} \PP\{\lvert g_1(Y) - K_1 \rvert \geq t\} \mathrm{d}t \leq 4  \int_{0}^{\infty}\; \ell t^{\ell - 1} \exp\Bigl\{-\frac{ct^2 k^2 R}{n^2 + tkn}\Bigr\} \mathrm{d}t \\
		&\leq \int_{0}^{C n/k}\, \ell t^{\ell - 1} \exp\Bigl\{-\frac{ct^2 k^2 R}{n^2}\Bigr\} \mathrm{d}t +  \int_{0}^{\infty}\, \ell t^{\ell - 1} \exp\Bigl\{-\frac{ct k R}{n}\Bigr\} \mathrm{d}t \\
		&= C_{\ell} \Bigl(\frac{n}{k\sqrt{R}}\Bigr)^{\ell} \cdot \int_{0}^{C\sqrt{R}}\, \ell u^{\ell - 1} e^{-u^2} \mathrm{d}u + C_{\ell} \Bigl(\frac{n}{kR}\Bigr)^{\ell} \cdot \int_{0}^{\infty}\, \ell u^{\ell - 1} e^{-u} \mathrm{d}u \\
		&\leq C_{\ell} \Bigl(\frac{n}{k\sqrt{R}}\Bigr)^{\ell} \cdot \bigl\{\Gamma(\ell/2) + \Gamma(\ell)\bigr\} \leq C_{\ell} \Bigl(\frac{n}{k\sqrt{R}}\Bigr)^{\ell} \leq C_{\ell} \cdot \max\Bigl\{n^{-\epsilon \ell/3}, n^{-\delta \ell}\Bigr\},
	\end{align*}
	where $\Gamma(\cdot)$ denotes the Gamma function, the penultimate inequality follows from the numeric inequality $\Gamma(\ell) \leq 3\ell^\ell$ for all $\ell \geq 1/2$ and the final inequality follows from substituting $R = \min\{\lceil (n/k)^{2} \cdot n^{2\epsilon/3}\rceil, n/2\}$.

	\paragraph{Proof of the inequality~\eqref{ineq:truncated-upper}.}
	Applying the Cauchy--Schwarz inequality yields
	\begin{align*}
		\EE_{\PP}\bigl[\{\tau(g_1(Y)) - K_1\}^2\mathbbm{1}\{\lvert g_1(Y) - K_1 \rvert > 1/2\}\bigr] &\leq \EE_{\PP}\bigl[\{\tau(g_1(Y)) - K_1\}^4\bigr]^{1/2} \PP\{\lvert g_1(Y) - K_1 \rvert > 1/2\}^{1/2} \\
		&\overset{\1}{\leq} \sqrt{2}\EE_{\PP}\bigl[\{\tau(g_1(Y)) - K_1\}^4\bigr]^{1/2} \max\Bigl\{e^{-c n^{2\epsilon/3}}, e^{-cn^{2\delta}} \Bigr\}\\
		&\overset{\2}{\leq} 4 \bigl(\EE_{\PP}[\tau(g_1(Y))^4] + 1\bigr)^{1/2} \max\Bigl\{e^{-c n^{2\epsilon/3}}, e^{-cn^{2\delta}} \Bigr\}\\
		&\leq C_{\ell} n^{4\ell + 2} \max\Bigl\{e^{-c n^{2\epsilon/3}}, e^{-cn^{2\delta}} \Bigr\},
	\end{align*}
	where step $\1$ follows from the inequality~\eqref{ineq:vertex-tail-bound} and step $\2$ follows from the numeric inequality $(a - b)^4 \leq 8a^4 + 8b^4$.
\end{proofof}

\section*{Acknowledgments}
We are thankful to the Simons Institute for the Theory of Computing for their hospitality during Fall 2021, where this work was initiated. K.A.V.\ was supported by European Research Council Advanced Grant 101019498.  A.S.W.\ was partially supported by an Alfred P.\ Sloan Research Fellowship and NSF CAREER Award CCF-2338091.

\bibliographystyle{alpha}
\bibliography{ref}

\end{document}